\documentclass[journal]{IEEEtran}

\usepackage{amsmath,amsthm,amssymb,amsfonts}
\usepackage{bm}
\DeclareMathOperator*{\argmin}{arg\,min}

\theoremstyle{plain}
\newtheorem{theorem}{Theorem}

\newtheorem{corollary}{Corollary}

\theoremstyle{definition}
\newtheorem{definition}{Definition}
\newtheorem*{remark}{Remark}

\usepackage{balance}
\usepackage{graphicx}
\usepackage{cite}
\usepackage{microtype}
\usepackage[dvipsnames]{xcolor}
\usepackage{subcaption}
\usepackage{booktabs}

\usepackage{tikz}
\usetikzlibrary{
    arrows.meta,
    decorations.pathmorphing,
    patterns
}

\usepackage[
    colorlinks=true,
    linkcolor=blue,
    citecolor=blue,
    urlcolor=blue,
    bookmarks=true,
    pdfborder={0 0 0}
]{hyperref}
\usepackage{hypcap} 

\newcommand{\R}{\mathbb{R}}
\newcommand{\diff}{\mathrm{d}}

\title{From Distance to Angle: One-Shot Detection\\ Under Isotropic Multivariate Cauchy Noise}

\author{%
    Yen-Chi~Lee,~\IEEEmembership{Member,~IEEE}%
    \thanks{This work was supported by the National Science and Technology Council of Taiwan (NSTC 113-2115-M-008-013-MY3). (Corresponding author: Yen-Chi Lee.)}%
    \thanks{%
        Y.-C. Lee is with the Department of Mathematics, National Central University, Taoyuan, Taiwan (e-mail: \texttt{yclee@math.ncu.edu.tw}).%
    }%
}

\begin{document}
\maketitle

\begin{abstract}
We study one-shot detection under isotropic multivariate Cauchy noise using finite constellations, with emphasis on the geometric mechanisms governing symbol-level reliability. Under isotropic Cauchy noise, the maximum-likelihood rule induces the same Euclidean Voronoi decision regions as in the Gaussian case, so the distinction lies not in the decision geometry itself but in how probability mass is distributed over these fixed regions. 
In the small-noise regime, we derive a reciprocal distance-spectrum upper bound for the symbol error probability (SEP), showing that this bound, and the associated reliability descriptor, retain a longer-range dependence on the global constellation geometry than under additive white Gaussian noise. 
In the large-noise regime, we prove that the correct-decision probability converges to a limit determined solely by the angular measure of the associated Voronoi recession cone.
These results formalize a regime-dependent transition from bound-based distance descriptors to angle-based reliability descriptors under heavy-tailed noise.
Beyond asymptotic characterization, we show that these descriptors also admit a lightweight design interpretation for planar constellations under a common average power budget. The theory is further illustrated through an asymmetric four-point example exhibiting geometric collapse, a standard four-point Quadrature Amplitude Modulation (4QAM) sanity check, and finite-$\gamma$ numerical validation for both asymptotic regimes, together with descriptor-guided design comparisons that reveal collapse avoidance and reciprocal-distance burden as practically meaningful screening criteria.
\end{abstract}

\begin{IEEEkeywords}
Isotropic multivariate Cauchy noise, distance spectrum, heavy-tailed noise, one-shot detection, Voronoi cone.
\end{IEEEkeywords}

\section{Introduction}
\label{sec:intro}

\IEEEPARstart{A}{dditive} white Gaussian noise (AWGN) has long served as the canonical model for detection and constellation design, under which reliability is largely governed by minimum-distance geometry. This viewpoint underlies much of modern digital communication theory \cite{Proakis2008}, particularly in the small-noise regime. Recent studies, however, have shown that in a variety of communication and sensing settings, the disturbance is better modeled by heavy-tailed, non-Gaussian laws than by the Gaussian assumption \cite{samorodnitsky1994stable,nikias1995signal,georgiou2002alpha,ilow2002analytic}. These models have in turn motivated a substantial body of work across information theory, signal processing, and communication systems, including capacity analysis, robust estimation, and signal detection in impulsive environments \cite{fahs2012capacity,verdu2023cauchy,pang2024information,tsakalides1996robust,tsihrintzis2002performance,brown2000nonparametric}.

Specifically, the Cauchy model has emerged as a prototypical member of the $\alpha$-stable family \cite{nikias1995signal}, offering a mathematically rigorous yet tractable departure from the Gaussian assumption. While its infinite variance and polynomial tail decay complicate traditional concentration-based analysis (e.g., via the Law-of-Large-Numbers (LLN)) \cite{verdu2023cauchy}, the model is far from a mere theoretical abstraction. It arises naturally in practical systems, including \textit{massive MIMO} \cite{gulgun2023massive} and \textit{diffusion-based molecular communication} \cite{farsad2015stable,Lee_TCOM24}. At a broader level, existing work on stable and Cauchy noise has progressed along several largely separate directions. In information theory, the adopted input constraint often changes with the characteristic exponent: the Gaussian case is classically studied under a second-moment constraint, the Cauchy case has been analyzed under logarithmic cost \cite{fahs2014cauchy}, and more general $\alpha$-stable channels have also been treated under alternative moment constraints to obtain tractable capacity bounds \cite{defreitas2017capacity}. In signal processing, impulsive environments have motivated generalized information and estimation tools beyond the Gaussian second-order framework \cite{fahs2017information,tsakalides1996robust,tsihrintzis2002performance,brown2000nonparametric}. In communication systems, heavy-tailed laws also arise from concrete channel models rather than from abstract distributional choices alone; in particular, Cauchy-type behavior can naturally emerge in molecular communication settings such as first-arrival-position channels \cite{Lee_TCOM24,Lee2024ZeroDrift}.

These developments point to a common missing piece. Although recent findings \cite{pang2024information} show that memoryless scalar Cauchy-noise channels can sustain positive information rates under a second-moment input (energy) constraint, and thus support a renewed power-constrained viewpoint in the Cauchy regime, a multivariate geometric theory for finite-constellation detection under isotropic multivariate Cauchy noise remains undeveloped. In particular, from the perspective of symbol-level detection, it is still unclear which geometric descriptors replace the classical minimum-distance paradigm once the noise law becomes heavy-tailed.
This motivates a timely re-examination of \textit{finite-constellation detection and design} (cf. \cite{van2004,Proakis2008,porath2003design,yang2014design,luna2013constellation}) in the presence of Cauchy noise, especially if one wishes to extract design guidance that remains interpretable under heavy-tailed reliability laws.

To address this need, this paper develops a geometric asymptotic theory for one-shot detection over channels corrupted by isotropic multivariate Cauchy noise. Our main contributions are summarized as follows:
\begin{itemize}
    \item We formulate one-shot finite-constellation detection under \textit{isotropic multivariate Cauchy noise} and clarify that, although the maximum-likelihood rule induces the same Euclidean Voronoi decision regions as in the Gaussian case, the associated reliability laws are governed by fundamentally different geometric mechanisms.
    \item In the small-noise regime, we derive a reciprocal \textit{distance-spectrum} upper bound showing that the resulting one-shot reliability descriptor retains a longer-range sensitivity to the global constellation geometry than in the Gaussian setting.
    \item In the large-noise regime, we prove that bounded distance information disappears asymptotically and that the correct-decision probability converges to a limit determined solely by the angular measure of the associated \textit{Voronoi recession cone} (cf. \cite{okabe1992}), thereby identifying an angle-based reliability descriptor and a corresponding geometric-collapse criterion.
    \item We further support this regime-dependent transition through explicit worked examples and finite-$\gamma$ numerical validation, showing how the theory manifests in both asymmetric (see Appendix~\ref{app:four_point}) and symmetric (see Appendix~\ref{app:4qam}) constellations. In addition, for planar constellations under a common average power budget, we show through descriptor-guided design comparisons that the asymptotic quantities developed in the paper already possess nontrivial finite-$\gamma$ screening value, thereby providing a lightweight bridge from asymptotic analysis to Cauchy-aware constellation pre-design and candidate-geometry screening (cf. \cite{porath2003design,yang2014design,isaka2000multilevel}).
\end{itemize}

The remainder of this paper is organized as follows. Section~\ref{sec:model} introduces the system model and problem statement. Section~\ref{sec:small_noise} presents the small-noise distance-spectrum bound. Section~\ref{sec:large_noise} establishes the large-noise angular convergence theorem and the associated geometric-collapse criterion. Section~\ref{sec:design_implications} extracts minimal design implications for planar Cauchy constellations from the asymptotic descriptors. Section~\ref{sec:numerical_results} provides finite-$\gamma$ numerical validation for both asymptotic regimes, together with descriptor-guided design comparisons. The main results are further illustrated through an asymmetric four-point worked example in Appendix~\ref{app:four_point} and a standard $4$QAM sanity check in Appendix~\ref{app:4qam}. Section~\ref{sec:limitations_future} discusses limitations and future directions, and conclusions are drawn in Section~\ref{sec:conclusion}.

\section{System Model and Problem Statement}
\label{sec:model}

\subsection{Vector Cauchy Channels}
\label{sec:vector_cauchy}

We focus on one-shot detection (also referred to as symbol-by-symbol detection), where the receiver performs optimal maximum-likelihood (ML) decoding based solely on the current observation vector. This setting isolates the symbol-level effect of heavy-tailed noise on decision geometry and reliability, without invoking coding across multiple channel uses.

We consider a $d$-dimensional discrete-time vector channel model defined as \( \bm{y} = \bm{x} + \bm{n} \), where $\bm{x} \in \mathcal{C} \subset \R^d$ is the transmitted signal vector from a finite constellation subject to an average energy constraint $\mathbb{E}[\|\bm{x}\|^2] \le P$. The additive noise $\bm{n} \in \R^d$ is modeled as an \textit{isotropic} $d$-dimensional Cauchy random vector with probability density function (PDF) (cf. \cite{samorodnitsky1994stable,verdu2023cauchy}),
\begin{equation}
    f_{\bm{N}}(\bm{n}) = c_d\,\gamma^{-d} \left( 1+\frac{\|\bm{n}\|^2}{\gamma^2} \right)^{-\frac{d+1}{2}},\quad c_d=\frac{\Gamma\!\left(\frac{d+1}{2}\right)}{\pi^{\frac{d+1}{2}}},
\end{equation}
where $\gamma > 0$ is the scale parameter controlling the distribution's width. Here, the noise is modeled as a single isotropic multivariate Cauchy vector in $\R^d$, rather than as independent componentwise Cauchy coordinates. Unlike Gaussian noise, the Cauchy distribution lacks finite moments; thus, standard signal-to-noise ratio (SNR) is technically undefined. We instead use the normalized proxy $P/\gamma^2$ to characterize system performance (cf. \cite{pang2024information}).

In the Gaussian setting, isotropy and componentwise i.i.d. structure lead to the same multivariate law. In the Cauchy setting, however, these two notions are no longer equivalent: the isotropic multivariate Cauchy law is radially symmetric, whereas independent Cauchy coordinates define a different joint distribution. This distinction matters here because our analysis is carried out under an isotropic multivariate Cauchy observation model rather than a componentwise independent one. It also places the present work in contrast with the memoryless scalar Cauchy-noise setting studied in \cite{pang2024information}.

Under this spherically symmetric model, the ML decoding rule minimizes the negative log-likelihood, which reduces to the standard Euclidean distance metric. Namely,
\begin{equation}
    \hat{\bm{x}} = \argmin_{\bm{c}\in\mathcal{C}} \log\left(1 + \frac{\|\bm{y} - \bm{c}\|^2}{\gamma^2}\right) \equiv \argmin_{\bm{c}\in\mathcal{C}} \|\bm{y} - \bm{c}\|.
    \label{eq:ml_rule}
\end{equation}
Because the isotropic Cauchy density is monotonically decreasing with the Euclidean norm, the exact decision boundaries induced by \eqref{eq:ml_rule} correspond to Euclidean Voronoi cells (cf. \cite{okabe1992,fortune1986sweepline}), entirely independent of $\gamma$. 
Under equiprobable signaling, the average symbol error probability is formally given by
\begin{equation}
    P_e = 1 - \frac{1}{M} \sum_{\bm{x}\in\mathcal C} \int_{V(\bm{x})} f_{\bm{N}}(\bm{y}-\bm{x})~\diff\bm{y}.
    \label{eq:exact_pe}
\end{equation}

For the rest of this paper, we use uppercase letters such as $X,Y,N$ for random variables and lowercase letters such as $x,y,n$ for their realizations.

\subsection{The Combining Gain Paradox}
\label{sec:paradox}

In traditional AWGN vector channels, linear aggregation strategies (e.g., maximum ratio combining) often yield an averaging gain under standard finite-variance assumptions, so reliability can improve with the number of spatial dimensions or channel paths \cite{Proakis2008}. Under isotropic multivariate Cauchy noise, this self-averaging intuition breaks down: the Cauchy law has neither finite mean nor finite variance, so the usual LLN viewpoint no longer applies. More fundamentally, under the isotropic multivariate Cauchy model, every one-dimensional projection onto a unit direction remains a scalar Cauchy random variable with the same scale parameter (cf. \cite{samorodnitsky1994stable,nikias1995signal}). Thus, after normalizing a linear combiner to a unit direction, the projected noise remains Cauchy with scale parameter $\gamma$, rather than concentrating through averaging as in the Gaussian case. For a non-normalized linear functional, the scale changes only according to the norm of the combining vector, not through a variance-reduction mechanism. Consequently, linear combining does not reduce the effective noise scale in the usual Gaussian sense, but instead produces another scalar Cauchy projection, thereby removing the classical concentration effect associated with multidimensional combining.

Because the exact $d$-dimensional Cauchy integral in \eqref{eq:exact_pe} lacks a closed-form solution over arbitrary Voronoi polytopes, understanding this contrast calls for a viewpoint beyond linear averaging arguments alone. The specific objective of this paper is therefore to geometrically reinterpret the error integral \eqref{eq:exact_pe}. By showing how symbol reliability transitions from distance-based to angle-based behavior, we aim to identify the geometric descriptors that govern one-shot detection under the isotropic Cauchy model and that may inform future geometry-aware receiver design.

\section{Small-Noise Regime: Distance Spectrum Bounds}
\label{sec:small_noise}

In this section, we formally derive a distance-spectrum upper bound for the one-shot reliability in the small-noise regime.

\subsection{Pairwise Comparison Term}
\label{sec:pairwise_term}

Before deriving the small-noise bound, we isolate the basic pairwise comparison event that enters the union-bound argument. Fix two distinct constellation points $x_i,x_j\in\mathcal C$, and define
\begin{equation}
    d_{ij}:=\|x_i-x_j\|, \qquad u_{ij}:=\frac{x_j-x_i}{\|x_j-x_i\|}.
\end{equation}
Assume that $x_i$ is transmitted. Since the isotropic Cauchy law is absolutely continuous, tie events occur with probability zero and do not affect the ML analysis. Under ML decoding, the event that $x_j$ is at least as favorable as $x_i$ is \( \|Y-x_j\|\le \|Y-x_i\| \). Since $Y=x_i+N$, this is equivalent to
\begin{equation}
    \|N-(x_j-x_i)\|\le \|N\|,
\end{equation}
or, after expanding the squared norms, \( u_{ij}^{\mathsf T}N \ge \frac{d_{ij}}{2} \). Because the noise vector $N$ is isotropic Cauchy, every one-dimensional projection onto a unit direction remains a scalar Cauchy random variable with the same scale parameter $\gamma$. Therefore,
\begin{equation}
\begin{aligned}
    \mathbb P\!\left(\|Y-x_j\|\le \|Y-x_i\| \,\middle|\, X=x_i\right) &= \mathbb P\!\left(u_{ij}^{\mathsf T}N \ge \frac{d_{ij}}{2}\right) \\
    &= \frac{1}{2}-\frac{1}{\pi}\arctan\!\left(\frac{d_{ij}}{2\gamma}\right).
\end{aligned}
\label{eq:pairwise_halfspace_tail}
\end{equation}
Equation \eqref{eq:pairwise_halfspace_tail} shows that each pairwise comparison term depends only on the Euclidean distance $d_{ij}$. This quantity will serve as the basic building block in the distance-spectrum bound of Theorem~\ref{thm:small_noise}.

\subsection{Main Result: Small-Noise Bound and Asymptotics}
We define the distance spectrum of $\mathcal C$ as the multiset $\mathcal S(\mathcal C) = \{d_{ij}:1\le i<j\le M\}$. The following theorem establishes an asymptotic upper-bound description of the error probability as the noise scale vanishes.

\begin{theorem}[Small-Noise Distance-Spectrum Bound]
\label{thm:small_noise}
Let $\mathcal C=\{x_1,\dots,x_M\}\subset\mathbb R^d$ be a finite constellation, and consider the isotropic $d$-dimensional Cauchy channel \( Y=X+N \), with maximum-likelihood decoding as in \eqref{eq:ml_rule}. For a fixed transmitted symbol $x_i\in\mathcal C$, define
\begin{equation}
    d_{ij}:=\|x_i-x_j\|, \qquad j\neq i.
\end{equation}
Then, for every $\gamma>0$, the conditional symbol error probability satisfies
\begin{equation}
    P_e(x_i;\gamma) \le \sum_{j\neq i} \left[ \frac12-\frac{1}{\pi}\arctan\!\left(\frac{d_{ij}}{2\gamma}\right) \right].
    \label{eq:small_noise_union_bound}
\end{equation}
Consequently, as $\gamma\to 0$,
\begin{equation}
    P_e(x_i;\gamma) \le \frac{2\gamma}{\pi}\sum_{j\neq i}\frac{1}{d_{ij}} + O(\gamma^3).
    \label{eq:small_noise_asymp_symbol}
\end{equation}
If the symbols are equiprobable, then the average symbol error probability satisfies
\begin{equation}
    P_e(\gamma) \le \frac{4\gamma}{M\pi} \sum_{1\le i<j\le M}\frac{1}{d_{ij}} + O(\gamma^3).
    \label{eq:small_noise_asymp_average}
\end{equation}
Hence, in the small-noise regime, this union-bound descriptor for the one-shot reliability is controlled, at leading order, by the reciprocal distance spectrum of the constellation.
\end{theorem}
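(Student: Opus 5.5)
The plan is a union bound followed by a Taylor expansion of the arctangent tail. Fix $x_i$ as the transmitted symbol. An error under the rule \eqref{eq:ml_rule} requires $Y\notin V(x_i)$. Up to the null tie set (the isotropic Cauchy law is absolutely continuous), this means some $j\neq i$ satisfies $\|Y-x_j\|\le\|Y-x_i\|$. So the error event is contained in $\bigcup_{j\neq i}\{\|Y-x_j\|\le \|Y-x_i\|\}$. Boole's inequality then bounds $P_e(x_i;\gamma)$ by the sum of the pairwise probabilities. Each of these has already been computed exactly in \eqref{eq:pairwise_halfspace_tail}: the half-space reduction $u_{ij}^{\mathsf T}N\ge d_{ij}/2$, together with the fact that unit-direction projections of isotropic Cauchy noise are scalar Cauchy with scale $\gamma$, gives $\tfrac12-\tfrac1\pi\arctan(d_{ij}/(2\gamma))$. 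Summing over $j\neq i$ yields \eqref{eq:small_noise_union_bound} for every $\gamma>0$.

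For the asymptotics, use the identity $\tfrac12-\tfrac1\pi\arctan t=\tfrac1\pi\arctan(1/t)$ for $t>0$. With $t=d_{ij}/(2\gamma)$, each term equals $\tfrac1\pi\arctan(2\gamma/d_{ij})$. The Maclaurin expansion $\arctan s=s-s^3/3+O(s^5)$ then gives
\begin{equation*}
\frac1\pi\arctan\!\left(\frac{2\gamma}{d_{ij}}\right)=\frac{2\gamma}{\pi d_{ij}}-\frac{8\gamma^3}{3\pi d_{ij}^3}+O(\gamma^5).
\end{equation*}
There are finitely many terms, and each $d_{ij}>0$ because the points are distinct. So the remainders combine into a single $O(\gamma^3)$, with constant depending only on $\min_{j\neq i} d_{ij}$ and $M$; this gives \eqref{eq:small_noise_asymp_symbol}. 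One could even drop the $O(\gamma^3)$ term: since $\arctan s\le s$ for $s\ge0$, the bound $\tfrac{2\gamma}{\pi}\sum_{j\neq i} 1/d_{ij}$ holds non-asymptotically. It is worth noting this in passing.

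For the average, take $P_e(\gamma)=\frac1M\sum_i P_e(x_i;\gamma)$ and apply the per-symbol bound. The double sum $\sum_i\sum_{j\neq i}1/d_{ij}$ counts each unordered pair twice, since $d_{ij}=d_{ji}$. It therefore equals $2\sum_{i<j}1/d_{ij}$, which yields the factor $\frac{4\gamma}{M\pi}$ in \eqref{eq:small_noise_asymp_average}. Averaging the remainders keeps them $O(\gamma^3)$.

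None of these steps is a real obstacle. The only points needing care are: justifying that ties are negligible, so the weak inequality in the pairwise event is harmless; and making the $O(\gamma^3)$ uniform by exploiting finiteness of $\mathcal C$ and positivity of the minimum distance. The substantive ingredient, the projection property of isotropic Cauchy noise, is already established in \eqref{eq:pairwise_halfspace_tail}.
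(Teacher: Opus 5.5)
Your proof is correct and follows essentially the same route as the paper: a union bound over the pairwise half-space events, the Cauchy projection property to get $\frac12-\frac1\pi\arctan(d_{ij}/(2\gamma))$, an arctangent expansion, and double counting for the average. The only differences are cosmetic. You expand $\frac1\pi\arctan(2\gamma/d_{ij})$ at $0$, whereas the paper expands $\arctan t$ at $\infty$; these are equivalent. Your side remark that $\arctan s\le s$ makes $\frac{2\gamma}{\pi}\sum_{j\neq i}1/d_{ij}$ a valid bound for every $\gamma>0$ is also correct, and slightly sharpens the stated asymptotic form.
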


\begin{proof}
Fix $x_i\in\mathcal C$ and assume that $x_i$ is transmitted. Let \( E_i:=\{\hat X\neq x_i\} \) denote the decoding error event under ML decoding.

For each competitor $x_j\in\mathcal C\setminus\{x_i\}$, define
\begin{equation}
    H_{ij} := \Bigl\{ n\in\mathbb R^d: \|x_i+n-x_j\|\le \|n\| \Bigr\}.
\end{equation}
The event $H_{ij}$ means that, after adding noise $n$, the point $x_j$ is at least as close to the received vector as $x_i$. Therefore, \( E_i \subseteq \bigcup_{j\neq i} H_{ij} \), and hence, by the union bound,
\begin{equation}
    P_e(x_i;\gamma) \le \sum_{j\neq i}\mathbb P(H_{ij}).
    \label{eq:union_bound_step}
\end{equation}

We now simplify each event $H_{ij}$. Expanding the squared norms gives
\begin{equation}
\begin{aligned}
    &\|x_i+n-x_j\|^2 \le \|n\|^2 \\
    &\Longleftrightarrow\quad \|n-(x_j-x_i)\|^2 \le \|n\|^2,
\end{aligned}
\end{equation}
which is equivalent to
\begin{equation}
    -2(x_j-x_i)^{\mathsf T}n+\|x_j-x_i\|^2 \le 0.
\end{equation}
Thus,
\begin{equation}
    H_{ij} = \left\{ (x_j-x_i)^{\mathsf T}N \ge \frac{\|x_j-x_i\|^2}{2} \right\}.
\end{equation}
Let
\begin{equation}
    u_{ij}:=\frac{x_j-x_i}{\|x_j-x_i\|}=\frac{x_j-x_i}{d_{ij}}.
\end{equation}
Then
\begin{equation}
    H_{ij} = \left\{ u_{ij}^{\mathsf T}N \ge \frac{d_{ij}}{2} \right\}.
\end{equation}

Because $N$ is an isotropic multivariate Cauchy vector with scale parameter $\gamma$, every one-dimensional projection $u^{\mathsf T}N$ onto a unit vector $u$ is a one-dimensional Cauchy random variable with location $0$ and scale $\gamma$, i.e., \( u^{\mathsf T}N \sim \mathrm{Cauchy}(0,\gamma) \), whose density is
\begin{equation}
    f_Z(z)=\frac{1}{\pi\gamma}\frac{1}{1+(z/\gamma)^2}, \qquad z\in\mathbb R.
\end{equation}
Therefore,
\begin{equation}
\begin{aligned}
    \mathbb P(H_{ij}) &= \mathbb P\!\left(u_{ij}^{\mathsf T}N \ge \frac{d_{ij}}{2}\right) \\
    &= \int_{d_{ij}/2}^{\infty} \frac{1}{\pi\gamma}\frac{1}{1+(z/\gamma)^2}~\diff z.
\end{aligned}
\end{equation}
Evaluating the integral yields
\begin{equation}
    \mathbb P(H_{ij}) = \frac12-\frac{1}{\pi}\arctan\!\left(\frac{d_{ij}}{2\gamma}\right).
\end{equation}
Substituting this into \eqref{eq:union_bound_step} proves \eqref{eq:small_noise_union_bound}.

To obtain the small-noise expansion, use the standard asymptotic formula \cite{gradshteyn2007table}:
\begin{equation}
    \arctan t = \frac{\pi}{2}-\frac{1}{t}+\frac{1}{3t^3}+O(t^{-5}), \qquad t\to\infty.
\end{equation}
Setting $t=d_{ij}/(2\gamma)$ gives, for each fixed $i\neq j$,
\begin{equation}
    \frac12-\frac{1}{\pi}\arctan\!\left(\frac{d_{ij}}{2\gamma}\right) = \frac{2\gamma}{\pi d_{ij}}+O(\gamma^3), \qquad \gamma\to 0.
\end{equation}
Since the constellation is finite, summing over $j\neq i$ yields
\begin{equation}
    P_e(x_i;\gamma) \le \frac{2\gamma}{\pi}\sum_{j\neq i}\frac{1}{d_{ij}} + O(\gamma^3),
\end{equation}
which proves \eqref{eq:small_noise_asymp_symbol}.

Finally, if the input symbols are equiprobable, then
\begin{equation}
    P_e(\gamma) = \frac{1}{M}\sum_{i=1}^M P_e(x_i;\gamma).
\end{equation}
Averaging the previous bound over $i$ gives
\begin{equation}
    P_e(\gamma) \le \frac{2\gamma}{M\pi} \sum_{i=1}^M\sum_{j\neq i}\frac{1}{d_{ij}} + O(\gamma^3).
\end{equation}
Using the symmetry $d_{ij}=d_{ji}$ and double counting,
\begin{equation}
    \sum_{i=1}^M\sum_{j\neq i}\frac{1}{d_{ij}} = 2\sum_{1\le i<j\le M}\frac{1}{d_{ij}},
\end{equation}
which proves \eqref{eq:small_noise_asymp_average}.
\end{proof}

\begin{remark}
Theorem~\ref{thm:small_noise} clarifies an important contrast between Gaussian and Cauchy detection. 
Under AWGN, pairwise error terms decay exponentially with distance, so the nearest-neighbor structure typically dominates the small-noise behavior \cite{simon2002some}. 
Under the isotropic Cauchy model, however, the pairwise comparison term in \eqref{eq:pairwise_halfspace_tail} satisfies
\begin{equation}
    \frac{1}{2}-\frac{1}{\pi}\arctan\!\left(\frac{d_{ij}}{2\gamma}\right) = \frac{2\gamma}{\pi d_{ij}}+O(\gamma^3), \qquad \gamma \to 0,
\end{equation}
which decays only algebraically in the distance $d_{ij}$. As a result, farther constellation points are suppressed more weakly than in the Gaussian case, and the small-noise error bound naturally depends on the full reciprocal distance spectrum rather than solely on the minimum distance \cite{martinez2005error}.
In this sense, Theorem~\ref{thm:small_noise} should be interpreted as a geometric upper bound showing that, even in the high-reliability regime, the Cauchy pairwise-comparison and union-bound analysis preserves a longer-range sensitivity to the global constellation geometry.
\end{remark}

\section{Large-Noise Regime: Angular Voronoi Convergence}
\label{sec:large_noise}

As $\gamma$ increases, the distance-based description breaks down. In this regime, the received vector is pushed far away from the transmitted symbol, so the bounded offsets among constellation points become negligible relative to the overall noise magnitude. We show that the decoding problem then loses sensitivity to finite Euclidean distances and retains only the asymptotic angular geometry of the Voronoi cells. The appropriate descriptor is therefore the recession cone of each Voronoi region and its associated spherical angular patch.

\subsection{Voronoi Cones and Angular Patches}
For each $x\in\mathcal C$, the associated Voronoi cell (cf. \cite{okabe1992,allevi2024basic}) is
\begin{equation}
    V(x)=\{y\in\mathbb R^d:\ \|y-x\|\le \|y-c\|,\ \forall c\in\mathcal C\}.
\end{equation}
Its recession cone is defined by
\begin{equation}
    \mathsf K(x) = \bigcap_{c\in\mathcal C\setminus\{x\}} \{u\in\mathbb R^d:\ u^{\mathsf T}(c-x)\le 0\}.
    \label{eq:cone_def}
\end{equation}
The spherical cross-section
\(
    A_x:=\mathsf K(x)\cap\mathbb S^{d-1}
\)
is called the \emph{Voronoi angular patch} of $x$, where $\mathbb S^{d-1}$ denotes the unit sphere in $\mathbb R^d$.

The geometric meaning of \eqref{eq:cone_def} is transparent after translating the Voronoi cell to the origin. Indeed,
\begin{equation}
    V(x)-x = \bigcap_{c\in\mathcal C\setminus\{x\}} \left\{ z\in\mathbb R^d:\ z^{\mathsf T}(c-x)\le \frac{\|c-x\|^2}{2} \right\}.
\end{equation}
Hence, after scaling by $\gamma$,
\begin{equation}
    \frac{V(x)-x}{\gamma} = \bigcap_{c\in\mathcal C\setminus\{x\}} \left\{ u\in\mathbb R^d:\ u^{\mathsf T}(c-x)\le \frac{\|c-x\|^2}{2\gamma} \right\},
    \label{eq:scaled_cell}
\end{equation}
which suggests the recession-cone limit \eqref{eq:cone_def} as $\gamma\to\infty$. Thus, the Voronoi cone $\mathsf K(x)$ is precisely the large-noise limit of the translated-and-rescaled decision region.

\subsection{Main Result: Large-Noise Convergence}
The following theorem shows that, in the infinite-noise limit, the correct-decision probability depends only on the angular measure of the recession cone.

\begin{theorem}[Large-Noise Convergence]
\label{thm:large_noise}
Let $\mathcal C\subset\mathbb R^d$ be a finite constellation, and let
\begin{equation}
    P_c(x;\gamma):=\mathbb P(\hat X=x\,|\,X=x)
\end{equation}
denote the conditional correct-decision probability under the isotropic $d$-dimensional Cauchy channel with scale parameter $\gamma$. For each $x\in\mathcal C$, let $\mathsf K(x)$ and $A_x$ be defined by \eqref{eq:cone_def} and the spherical cross-section, and let $\sigma_{d-1}$ denote the usual surface measure on $\mathbb S^{d-1}$. Then
\begin{equation}
    \lim_{\gamma\to\infty} P_c(x;\gamma) = \frac{\sigma_{d-1}(A_x)}{\sigma_{d-1}(\mathbb S^{d-1})}.
    \label{eq:large_noise_limit}
\end{equation}
Equivalently,
\begin{equation}
    \lim_{\gamma\to\infty} P_e(x;\gamma) = 1-\frac{\sigma_{d-1}(A_x)}{\sigma_{d-1}(\mathbb S^{d-1})},
    \label{eq:large_noise_error_limit}
\end{equation}
where $P_e(x;\gamma)=1-P_c(x;\gamma)$.

If the constellation symbols are equiprobable, define
\begin{equation}
    P_c(\gamma):=\frac{1}{M}\sum_{x\in\mathcal C} P_c(x;\gamma).
\end{equation}
Then
\begin{equation}
    \lim_{\gamma\to\infty} P_c(\gamma) = \frac{1}{M}\sum_{x\in\mathcal C} \frac{\sigma_{d-1}(A_x)}{\sigma_{d-1}(\mathbb S^{d-1})}.
    \label{eq:large_noise_average_limit}
\end{equation}
\end{theorem}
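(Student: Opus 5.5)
The plan is to write $P_c(x;\gamma)$ as a probability for a fixed, $\gamma$-independent noise vector and then pass to the limit by dominated convergence. Write $N=\gamma Z$, where $Z$ is a standard isotropic Cauchy vector (scale $1$) with density $f_Z(z)=c_d(1+\|z\|^2)^{-(d+1)/2}$. Since $Y=x+N$ and ML decisions are the Voronoi cells, we have
\begin{equation*}
P_c(x;\gamma)=\mathbb P\bigl(Z\in (V(x)-x)/\gamma\bigr)=\int_{\mathbb R^d}\mathbf 1_{D_\gamma}(z)\,f_Z(z)\,\diff z,
\end{equation*}
where $D_\gamma$ is the scaled cell in \eqref{eq:scaled_cell}. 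Ties and boundaries are null sets because the law is absolutely continuous, so it does not matter which sides are included.

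The first step is pointwise convergence of the indicators. Write $a_c:=c-x$ and $b_c:=\|c-x\|^2/2>0$. Then $z\in D_\gamma$ if and only if $z^{\mathsf T}a_c\le b_c/\gamma$ for all $c\neq x$.
\begin{itemize}
\item If $z^{\mathsf T}a_c<0$ for all $c$, then $z\in D_\gamma$ for every $\gamma$, so $\mathbf 1_{D_\gamma}(z)\to 1$.
\item If $z^{\mathsf T}a_c>0$ for some $c$, then $z\notin D_\gamma$ once $\gamma>b_c/(z^{\mathsf T}a_c)$, so $\mathbf 1_{D_\gamma}(z)\to 0$.
\end{itemize}
Hence $\mathbf 1_{D_\gamma}\to\mathbf 1_{\mathsf K(x)}$ except on $\bigcup_c\{z:z^{\mathsf T}a_c=0\}$. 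This is a finite union of hyperplanes, since $a_c\neq 0$, and so has Lebesgue measure zero. Because $f_Z$ is integrable and the indicators are bounded by $1$, dominated convergence gives $P_c(x;\gamma)\to\mathbb P(Z\in\mathsf K(x))$. Monotonicity also works here: $D_\gamma$ is decreasing in $\gamma$ and contains $\mathsf K(x)$. So one could instead use continuity from above, noting that $\bigcap_\gamma D_\gamma=\mathsf K(x)$ exactly, since that intersection imposes $z^{\mathsf T}a_c\le 0$.

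The second step identifies $\mathbb P(Z\in\mathsf K(x))$ with the normalized angular measure. Pass to polar coordinates $z=r\theta$ with $r>0$ and $\theta\in\mathbb S^{d-1}$, so that $\diff z=r^{d-1}\diff r\,\diff\sigma_{d-1}(\theta)$. Since $\mathsf K(x)$ is a cone, $r\theta\in\mathsf K(x)$ if and only if $\theta\in A_x$ (the origin is null). The density is radial, so the integral factorizes:
\begin{equation*}
\mathbb P(Z\in\mathsf K(x))=\sigma_{d-1}(A_x)\int_0^\infty c_d(1+r^2)^{-(d+1)/2}r^{d-1}\diff r.
\end{equation*}
Applying the same computation to $\mathsf K=\mathbb R^d$ shows that the radial integral equals $1/\sigma_{d-1}(\mathbb S^{d-1})$. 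This gives \eqref{eq:large_noise_limit} without computing constants; in fact it holds for any isotropic law with no atom at $0$. Then \eqref{eq:large_noise_error_limit} follows by complement, and \eqref{eq:large_noise_average_limit} follows by averaging finitely many limits.

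The main point needing care is the first step. One must handle the case where some $z^{\mathsf T}a_c=0$, where convergence may fail, by showing that this set is Lebesgue-null. One should also note that $A_x$ may have empty interior or measure zero, which is the geometric collapse case; this causes no issue because the argument never needs $\mathsf K(x)$ to be full-dimensional. The fact that each $b_c>0$ (the points are distinct) is what makes the boundary offsets vanish under the rescaling.
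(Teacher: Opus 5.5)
Your proposal is correct and follows the paper's proof essentially step for step: the rescaling $N=\gamma Z$, a.e.\ convergence of $\mathbf 1_{D_\gamma}$ to $\mathbf 1_{\mathsf K(x)}$ off a finite union of hyperplanes, dominated convergence, and then the polar factorization with the radial integral normalized by the full-space integral. Your side remark that $D_\gamma$ decreases to exactly $\mathsf K(x)$ goes slightly beyond the paper: continuity from above then yields the limit without the null-set argument, and it also shows that $P_c(x;\gamma)$ is non-increasing in $\gamma$.
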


\begin{proof}
Fix $x\in\mathcal C$. By definition of ML decoding,
\begin{equation}
    P_c(x;\gamma) = \mathbb P(Y\in V(x)\mid X=x) = \mathbb P(N\in V(x)-x).
\end{equation}
Since the noise density is
\begin{equation}
    f_N(n) = c_d\,\gamma^{-d}\left(1+\frac{\|n\|^2}{\gamma^2}\right)^{-\frac{d+1}{2}},
\end{equation}
we obtain
\begin{equation}
    P_c(x;\gamma) = \int_{V(x)-x} c_d\,\gamma^{-d}\left(1+\frac{\|n\|^2}{\gamma^2}\right)^{-\frac{d+1}{2}}~\diff n.
\end{equation}
Applying the change of variables $n=\gamma u$ yields
\begin{equation}
    P_c(x;\gamma) = \int_{(V(x)-x)/\gamma} c_d\,(1+\|u\|^2)^{-\frac{d+1}{2}}~\diff u.
    \label{eq:pc_scaled_integral}
\end{equation}

We now show that the indicator of the scaled region converges almost everywhere to the indicator of the recession cone. From \eqref{eq:scaled_cell},
\begin{equation}
    \frac{V(x)-x}{\gamma} = \bigcap_{c\in\mathcal C\setminus\{x\}} \left\{ u:\ u^{\mathsf T}(c-x)\le \frac{\|c-x\|^2}{2\gamma} \right\}.
\end{equation}
Let
\begin{equation}
    B_x := \bigcup_{c\in\mathcal C\setminus\{x\}} \{u\in\mathbb R^d:\ u^{\mathsf T}(c-x)=0\}.
\end{equation}
Since $\mathcal C$ is finite, $B_x$ is a finite union of hyperplanes and therefore has Lebesgue measure zero.

Now fix $u\notin B_x$. Then for every $c\neq x$, the quantity $u^{\mathsf T}(c-x)$ is nonzero. If $u\in\mathsf K(x)$, then all these inner products are strictly negative, so for sufficiently large $\gamma$ they satisfy
\begin{equation}
    u^{\mathsf T}(c-x)\le \frac{\|c-x\|^2}{2\gamma}, \qquad \forall c\neq x,
\end{equation}
and hence $u\in (V(x)-x)/\gamma$. Conversely, if $u\notin\mathsf K(x)$, then there exists some $c\neq x$ such that $u^{\mathsf T}(c-x)>0$, and for sufficiently large $\gamma$,
\begin{equation}
    u^{\mathsf T}(c-x)>\frac{\|c-x\|^2}{2\gamma},
\end{equation}
so $u\notin (V(x)-x)/\gamma$. Therefore,
\begin{equation}
    \mathbf 1_{(V(x)-x)/\gamma}(u)\to \mathbf 1_{\mathsf K(x)}(u) \qquad\text{for a.e. }u\in\mathbb R^d.
    \label{eq:indicator_convergence}
\end{equation}

Since
\begin{equation}
    0 \le \mathbf 1_{(V(x)-x)/\gamma}(u)\, c_d(1+\|u\|^2)^{-\frac{d+1}{2}} \le c_d(1+\|u\|^2)^{-\frac{d+1}{2}},
\end{equation}
and the dominating function on the right is integrable over $\mathbb R^d$, the dominated convergence theorem applied to \eqref{eq:pc_scaled_integral} and \eqref{eq:indicator_convergence} gives
\begin{equation}
    \lim_{\gamma\to\infty}P_c(x;\gamma) = \int_{\mathsf K(x)} c_d(1+\|u\|^2)^{-\frac{d+1}{2}}~\diff u.
    \label{eq:cone_integral_limit}
\end{equation}

It remains to evaluate the right-hand side in angular form. Because the density is radial, we use polar coordinates $u=r\theta$, where $r\in[0,\infty)$ and $\theta\in\mathbb S^{d-1}$. Since $\mathsf K(x)$ is a cone, its polar decomposition is exactly \( \mathsf K(x)=\{r\theta:\ r\ge 0,\ \theta\in A_x\} \). Hence
\begin{equation}
\begin{aligned}
    &\int_{\mathsf K(x)} c_d(1+\|u\|^2)^{-\frac{d+1}{2}} \diff u \\
    &= \int_{A_x} \! \int_0^\infty c_d(1+r^2)^{-\frac{d+1}{2}} r^{d-1} \diff r \diff \sigma_{d-1}(\theta) \\
    &= \sigma_{d-1}(A_x) \underbrace{\int_0^\infty c_d(1+r^2)^{-\frac{d+1}{2}} r^{d-1} \diff r}_{=:I_d}.
\end{aligned}
\label{eq:polar_cone}
\end{equation}
On the other hand, integrating the same radial density over the entire space $\mathbb R^d$ gives
\begin{equation}
    1 = \int_{\mathbb R^d} c_d(1+\|u\|^2)^{-\frac{d+1}{2}}~\diff u = \sigma_{d-1}(\mathbb S^{d-1})\, I_d.
\end{equation}
Thus
\begin{equation}
    I_d=\frac{1}{\sigma_{d-1}(\mathbb S^{d-1})}.
\end{equation}
Substituting this into \eqref{eq:polar_cone} and then into \eqref{eq:cone_integral_limit} yields
\begin{equation}
    \lim_{\gamma\to\infty}P_c(x;\gamma) = \frac{\sigma_{d-1}(A_x)}{\sigma_{d-1}(\mathbb S^{d-1})},
\end{equation}
which proves \eqref{eq:large_noise_limit}. Equation \eqref{eq:large_noise_error_limit} follows from $P_e(x;\gamma)=1-P_c(x;\gamma)$, and \eqref{eq:large_noise_average_limit} follows by averaging over equiprobable symbols.
\end{proof}

\begin{corollary}[Condition for Geometric Collapse]
\label{cor:collapse}
Let $x\in\mathcal C$. If the angular patch $A_x$ has zero surface measure, i.e.,
\begin{equation}
    \sigma_{d-1}(A_x)=0,
\end{equation}
then
\begin{equation}
    \lim_{\gamma\to\infty}P_c(x;\gamma)=0, \qquad \lim_{\gamma\to\infty}P_e(x;\gamma)=1.
    \label{eq:collapse_limit}
\end{equation}
In particular, if $x$ lies in the interior of $\operatorname{conv}(\mathcal C)$, then $A_x=\varnothing$, and therefore geometric collapse occurs.
\end{corollary}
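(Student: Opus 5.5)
The first assertion follows directly from Theorem~\ref{thm:large_noise}. If $\sigma_{d-1}(A_x)=0$, then \eqref{eq:large_noise_limit} gives $\lim_{\gamma\to\infty}P_c(x;\gamma)=0$. Since $P_e(x;\gamma)=1-P_c(x;\gamma)$, we also get $\lim_{\gamma\to\infty}P_e(x;\gamma)=1$, which is \eqref{eq:collapse_limit}. Nothing further is needed here.

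The substantive part is the claim that $A_x=\varnothing$ when $x$ lies in the interior of $\operatorname{conv}(\mathcal C)$. I will argue by contradiction. Suppose some $u\in A_x$, so $\|u\|=1$ and $u^{\mathsf T}(c-x)\le 0$ for every $c\in\mathcal C\setminus\{x\}$ (and trivially for $c=x$). The linear function $\ell(y):=u^{\mathsf T}(y-x)$ is then $\le 0$ on all of $\mathcal C$. Hence $\ell\le 0$ on $\operatorname{conv}(\mathcal C)$, because a convex combination of nonpositive values of an affine function is nonpositive.

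Since $x$ is an interior point, there is $\varepsilon>0$ with $x+\varepsilon u\in\operatorname{conv}(\mathcal C)$. But $\ell(x+\varepsilon u)=\varepsilon\|u\|^2=\varepsilon>0$, a contradiction. Therefore $\mathsf K(x)=\{0\}$ and $A_x=\mathsf K(x)\cap\mathbb S^{d-1}=\varnothing$. Then $\sigma_{d-1}(A_x)=0$, and the first part yields collapse.

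The only step needing care is that "interior" must mean the interior in $\mathbb R^d$, so that a small ball around $x$ lies in the hull. If $\operatorname{conv}(\mathcal C)$ is lower-dimensional it has empty interior, the hypothesis is vacuous, and the statement holds trivially. With relative interior instead, the cone would contain the orthogonal complement of the affine hull and the claim would fail, so I will make the full-dimensional reading explicit. Beyond that, the argument is a short supporting-hyperplane observation and I expect no real obstacle.
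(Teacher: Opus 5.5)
Your proposal is correct and follows essentially the same route as the paper: the first claim is read off directly from Theorem~\ref{thm:large_noise}, and the interior-point claim uses the same contradiction (the linear functional $u^{\mathsf T}(\cdot-x)$ is nonpositive on $\operatorname{conv}(\mathcal C)$, yet positive at $x+\varepsilon u$). Your added remark that ``interior'' must mean the full-dimensional interior is a correct clarification. Under the relative-interior reading, $A_x$ would be a nonempty great subsphere; it still has $\sigma_{d-1}$-measure zero, so collapse still holds, but $A_x=\varnothing$ fails.
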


\begin{proof}
If $\sigma_{d-1}(A_x)=0$, then \eqref{eq:large_noise_limit} in Theorem~\ref{thm:large_noise} immediately gives
\begin{equation}
    \lim_{\gamma\to\infty}P_c(x;\gamma) = \frac{\sigma_{d-1}(A_x)}{\sigma_{d-1}(\mathbb S^{d-1})} = 0,
\end{equation}
which proves \eqref{eq:collapse_limit}.

For the interior-point claim, suppose that $x\in \operatorname{int}(\operatorname{conv}(\mathcal C))$. If there existed some nonzero $u\in\mathsf K(x)$, then by definition of $\mathsf K(x)$,
\begin{equation}
    u^{\mathsf T}(c-x)\le 0, \qquad \forall c\in\mathcal C.
\end{equation}
By convexity, this would imply
\begin{equation}
    u^{\mathsf T}(y-x)\le 0, \qquad \forall y\in\operatorname{conv}(\mathcal C).
\end{equation}
However, since $x$ is an interior point of $\operatorname{conv}(\mathcal C)$, there exists $\varepsilon>0$ such that
\begin{equation}
    x+\varepsilon \frac{u}{\|u\|}\in \operatorname{conv}(\mathcal C),
\end{equation}
which would give
\begin{equation}
    u^{\mathsf T}\!\left(x+\varepsilon \frac{u}{\|u\|}-x\right) = \varepsilon \|u\|>0,
\end{equation}
a contradiction. Therefore $\mathsf K(x)$ contains no nonzero direction, so
\begin{equation}
    A_x=\mathsf K(x)\cap\mathbb S^{d-1}=\varnothing.
\end{equation}
This completes the proof.
\end{proof}


\section{Design Implications for 2D Cauchy Constellations}
\label{sec:design_implications}

The results developed in Sections~\ref{sec:small_noise} and~\ref{sec:large_noise} are analytical in nature, but they already suggest nontrivial design implications for planar constellations under isotropic multivariate Cauchy noise. To keep the present extension lightweight, we restrict attention here to two-dimensional, equiprobable, finite constellations under a common average power budget. Our aim is not to formulate or solve a full constrained optimization problem, but rather to extract simple, design-oriented descriptor-level principles that may guide future Cauchy-aware constellation design, screening, and refinement.

\subsection{Hull-Only Principle and Collapse Avoidance}
\label{sec:hull_only_principle}

We begin with the large-noise regime. For a point $x_i\in\mathcal C\subset\mathbb R^2$, let \( A_{x_i}:=\mathsf K(x_i)\cap \mathbb S^1 \) denote its Voronoi angular patch, as defined in Section~\ref{sec:large_noise}. To avoid confusion with this patch notation, we define the corresponding \emph{normalized angular robustness} by
\begin{equation}
    A_i(\mathcal C) := \frac{\sigma_1(A_{x_i})}{\sigma_1(\mathbb S^1)},
    \label{eq:angular_robustness}
\end{equation}
where $\sigma_1$ is the arc-length measure on $\mathbb S^1$. By Theorem~\ref{thm:large_noise},
\begin{equation}
    \lim_{\gamma\to\infty} P_c(x_i;\gamma)=A_i(\mathcal C).
    \label{eq:angular_robustness_limit}
\end{equation}

Corollary~\ref{cor:collapse} already shows that if $x_i\in \operatorname{int}(\operatorname{conv}(\mathcal C))$, then $A_i(\mathcal C)=0$, so geometric collapse occurs. In the planar setting, this observation can be sharpened geometrically: a positive angular patch can arise only at a vertex of the convex hull. Indeed, if a point lies in the interior of the hull, or on the relative interior of one of its edges, then the associated recession cone has zero angular width, hence zero arc-length on $\mathbb S^1$. By contrast, if $x_i$ is a hull vertex, then $A_i(\mathcal C)$ is exactly the corresponding exterior-angle fraction. That is,
\begin{equation}
    A_i(\mathcal C)=\frac{\phi_i}{2\pi},
    \label{eq:exterior_angle_fraction}
\end{equation}
where $\phi_i$ denotes the exterior angle of $\operatorname{conv}(\mathcal C)$ at $x_i$.

This immediately suggests a simple but important design principle: if one wishes every symbol to retain a nonzero large-noise correct-decision limit, then one should avoid interior placements and favor \emph{hull-only} constellations. Moreover, even among hull vertices, extremely acute geometric configurations produce very small exterior angles and hence very small $A_i(\mathcal C)$. Therefore, from the large-noise viewpoint, a robust planar Cauchy constellation should not only avoid interior points, but also avoid highly unbalanced hull geometries that leave some symbols with vanishingly small angular protection. In this sense, $A_i(\mathcal C)$ serves as a symbol-level asymptotic robustness score, while $A_{\min}(\mathcal C):=\min_i A_i(\mathcal C)$ provides a natural worst-symbol screening metric for collapse avoidance.

\subsection{Reciprocal-Distance Burden}
\label{sec:reciprocal_distance_burden}

We next turn to the small-noise regime. Theorem~\ref{thm:small_noise} shows that, for each transmitted symbol $x_i$,
\begin{equation}
    P_e(x_i;\gamma) \le \frac{2\gamma}{\pi}\sum_{j\neq i}\frac{1}{\|x_i-x_j\|} + O(\gamma^3), \qquad \gamma\to 0.
    \label{eq:small_noise_burden_bound}
\end{equation}
This motivates the following symbol-level descriptor.

\begin{definition}
For a finite constellation $\mathcal C=\{x_1,\dots,x_M\}\subset\mathbb R^2$, define the \emph{reciprocal-distance burden} of symbol $x_i$ by
\begin{equation}
    B_i(\mathcal C) := \sum_{j\neq i}\frac{1}{\|x_i-x_j\|}.
    \label{eq:burden_symbol}
\end{equation}
We also define the corresponding worst-symbol burden by
\begin{equation}
    B_{\max}(\mathcal C) := \max_{1\le i\le M} B_i(\mathcal C).
    \label{eq:burden_max}
\end{equation}
\end{definition}

Under this notation, \eqref{eq:small_noise_burden_bound} becomes
\begin{equation}
    P_e(x_i;\gamma) \le \frac{2\gamma}{\pi} B_i(\mathcal C)+O(\gamma^3), \qquad \gamma\to 0.
    \label{eq:burden_rewrite}
\end{equation}
The quantity $B_i(\mathcal C)$ should be viewed as a Cauchy-specific measure of how ``crowded'' symbol $x_i$ is from the small-noise viewpoint. Unlike the Gaussian nearest-neighbor heuristic, $B_i(\mathcal C)$ aggregates the influence of \emph{all} other constellation points, reflecting the slow algebraic suppression of distant competitors under Cauchy tails.

This interpretation leads to a second design implication. Under a common power budget, a planar Cauchy constellation should not be judged solely by its minimum distance. It should also avoid geometric arrangements that create a large reciprocal-distance burden for some symbols, such as severe clustering, strongly uneven point density, or configurations in which one point is surrounded too closely by many others. In this sense, $B_{\max}(\mathcal C)$ provides a conservative bound-based design-screening descriptor for the high-reliability regime: a smaller value indicates weaker upper-bound worst-symbol sensitivity to the global constellation geometry and hence suggests a more balanced small-noise reliability profile.

\subsection{A Minimal Joint Design Objective}
\label{sec:minimal_joint_design_objective}

The descriptors above correspond to two opposite asymptotic regimes. The quantity $B_{\max}(\mathcal C)$ captures the bound-based worst-symbol vulnerability in the small-noise regime, whereas the quantity
\begin{equation}
    A_{\min}(\mathcal C) := \min_{1\le i\le M} A_i(\mathcal C)
    \label{eq:angular_robustness_min}
\end{equation}
captures the weakest large-noise angular robustness across symbols. Taken together, they suggest a minimal joint design viewpoint.

Specifically, for planar constellations normalized to a common average power budget $P_0$, one may compare candidate designs using a power-normalized surrogate criterion. To keep the two terms dimensionally compatible, define the normalized worst-symbol burden
\begin{equation}
    \overline B_{\max}(\mathcal C) := \sqrt{P_0}\,B_{\max}(\mathcal C).
    \label{eq:normalized_burden_max}
\end{equation}
Since $P_0$ is fixed across the candidate class, this normalization does not change the ordering induced by $B_{\max}(\mathcal C)$. We then use the surrogate criterion
\begin{equation}
    J_{\lambda}(\mathcal C) := \lambda\, \overline B_{\max}(\mathcal C) - (1-\lambda)\, A_{\min}(\mathcal C), \qquad 0\le \lambda\le 1.
    \label{eq:minimal_joint_objective}
\end{equation}
Here $\lambda$ controls the relative emphasis placed on the small-noise and large-noise descriptors. A larger $\lambda$ favors constellations with smaller power-normalized reciprocal-distance burden, whereas a smaller $\lambda$ places greater emphasis on avoiding angular fragility and geometric collapse.

We stress that \eqref{eq:minimal_joint_objective} is \emph{not} proposed as an exact finite-$\gamma$ performance metric, nor as a complete optimization formulation. Rather, it serves as a lightweight descriptor-level surrogate suggested directly by Theorems~\ref{thm:small_noise} and~\ref{thm:large_noise}. Its role is mainly comparative: it provides a simple way to rank normalized candidate constellations according to whether they simultaneously (i) avoid collapse in the large-noise regime and (ii) reduce worst-symbol reciprocal-distance crowding in the small-noise regime. From a signal-design viewpoint, \eqref{eq:minimal_joint_objective} may therefore be interpreted as a low-complexity pre-design proxy for screening candidate geometries before more detailed finite-$\gamma$ evaluation.

Accordingly, the present theory already suggests a basic two-stage design philosophy for planar Cauchy constellations. First, discard geometries with $A_{\min}(\mathcal C)=0$, since these necessarily contain symbols that collapse asymptotically. Second, among the remaining normalized candidates, prefer those with smaller $\overline B_{\max}(\mathcal C)$, equivalently smaller $B_{\max}(\mathcal C)$ within a fixed-power candidate class, while monitoring how much angular imbalance is introduced. This viewpoint does not yet constitute a full synthesis theory, but it shows that the asymptotic analysis developed in this paper already yields concrete and nontrivial design guidance beyond pure performance characterization, and already points toward a descriptor-driven methodology for Cauchy-aware constellation pre-design in two dimensions.


\section{Numerical Results}
\label{sec:numerical_results}

\begin{figure*}[!t]
    \centering
    \begin{subfigure}[t]{0.4\textwidth}
        \centering
        \includegraphics[width=\linewidth]{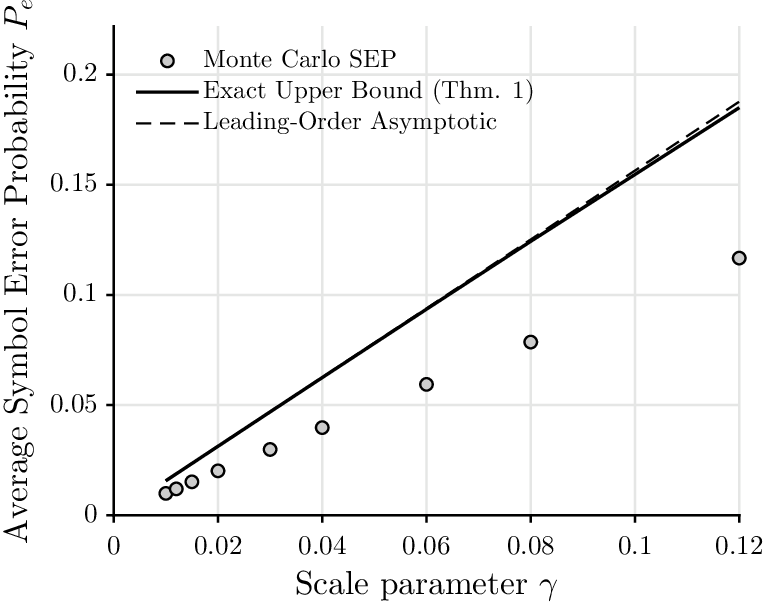}
        \caption{Small-noise average symbol error probability.}
        \label{fig:sim_small_gamma_avg}
    \end{subfigure}
    \hfill
    \begin{subfigure}[t]{0.4\textwidth}
        \centering
        \includegraphics[width=\linewidth]{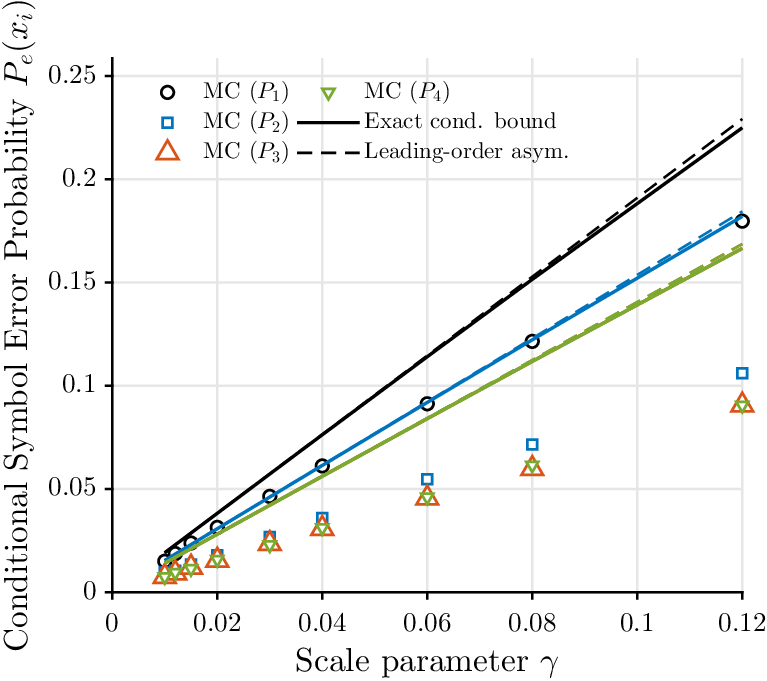}
        \caption{Small-noise conditional symbol error probabilities.}
        \label{fig:sim_small_gamma_cond}
    \end{subfigure}

    \vspace{0.35em}

    \begin{subfigure}[t]{0.4\textwidth}
        \centering
        \includegraphics[width=\linewidth]{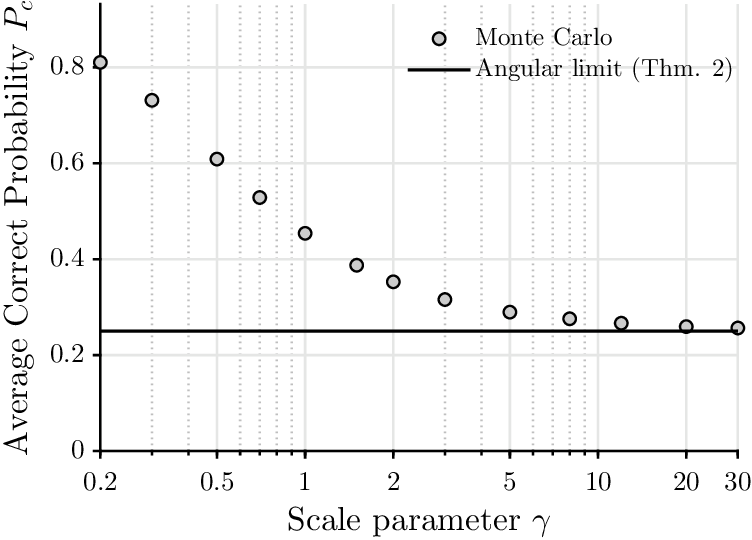}
        \caption{Large-noise average correct-decision probability.}
        \label{fig:sim_large_gamma_avg}
    \end{subfigure}
    \hfill
    \begin{subfigure}[t]{0.4\textwidth}
        \centering
        \includegraphics[width=\linewidth]{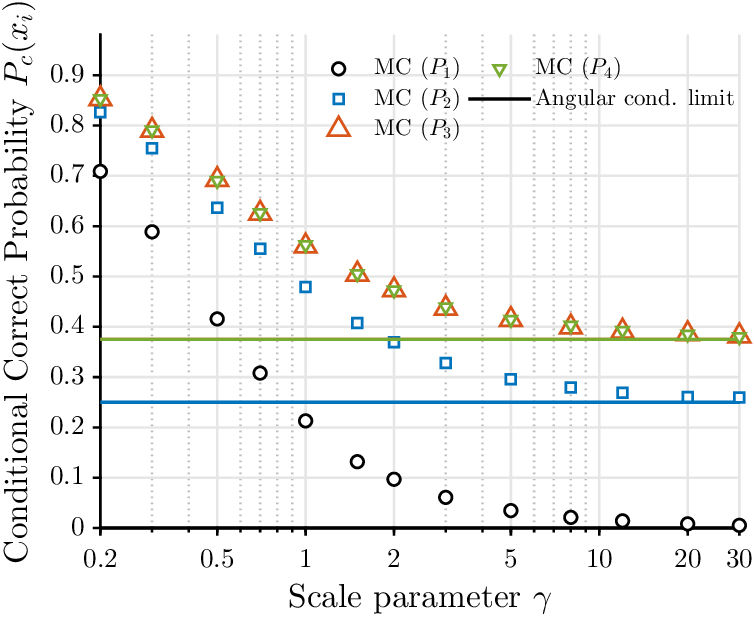}
        \caption{Large-noise conditional correct-decision probabilities.}
        \label{fig:sim_large_gamma_cond}
    \end{subfigure}
    \caption{Finite-$\gamma$ validation of the baseline asymptotic descriptors for the asymmetric four-point constellation in Appendix~\ref{app:four_point}.
    \textbf{Panels (a)--(b)} illustrate the small-noise upper-bound descriptor in Theorem~\ref{thm:small_noise}, while \textbf{panels (c)--(d)} illustrate convergence toward the angular limits in Theorem~\ref{thm:large_noise}, including the collapse of $P_1$.}
    \label{fig:baseline_validations}
\end{figure*}

Unless otherwise stated, all Monte Carlo (MC) experiments in this section use equiprobable signaling, a fixed random seed \texttt{rng(2026,'twister')}, and \(N_{\mathrm{mc}}=5\times 10^5\) trials for each value of \(\gamma\), processed in batches of size \(10^5\). The additive noise is generated as an isotropic multivariate Cauchy vector in \(\mathbb{R}^2\) via
\begin{equation}
    N=\gamma\,\frac{G}{H},
    \label{eq:sim_common_sampler}
\end{equation}
where \(G\sim\mathcal{N}(0,I_2)\) and \(H\sim\mathcal{N}(0,1)\) are independent. In all simulations, maximum-likelihood decoding is implemented by the equivalent Euclidean nearest-neighbor rule in \eqref{eq:ml_rule}. For each experiment, the reported average and symbol-level quantities are computed from the same underlying Monte Carlo runs under the experiment-specific \(\gamma\)-grid stated explicitly below. The numerical study has two roles: first, to confirm that the asymptotic descriptors developed in Sections~\ref{sec:small_noise} and~\ref{sec:large_noise} are already visible at finite noise scales; and second, to assess whether these descriptors provide useful guidance for the design comparisons in Section~\ref{sec:design_implications}.

\subsection{Finite-\texorpdfstring{$\gamma$}{gamma} Validation of Theorem~\ref{thm:small_noise}}
\label{sec:sim_small_gamma}

We validate Theorem~\ref{thm:small_noise} in the finite small-noise regime using the asymmetric four-point constellation from Appendix~\ref{app:four_point},
\begin{equation}
    \mathcal{X}=\{(0,0),(1,0),(0,1),(0,-1)\}.
    \label{eq:sim1_constellation}
\end{equation}
For each
\begin{equation}
    \gamma \in \{0.01, 0.012, 0.015, 0.02, 0.03, 0.04, 0.06, 0.08, 0.12\},
    \label{eq:sim1_gamma_grid}
\end{equation}
we estimate the symbol error probability by Monte Carlo simulation and compare it with both the exact union bound from Theorem~\ref{thm:small_noise} and its leading-order small-$\gamma$ asymptotic expression.

Fig.~\ref{fig:baseline_validations}(a)--(b) shows that the small-noise descriptor from Theorem~\ref{thm:small_noise} is already visible at finite noise scales. In Fig.~\ref{fig:baseline_validations}(a), the Monte Carlo average symbol error probability remains below the exact upper bound over the tested range, while the exact bound and its leading-order asymptotic approximation are nearly indistinguishable. Thus, although the upper bound is not numerically tight, it correctly captures the relevant small-noise scaling.

Fig.~\ref{fig:baseline_validations}(b) further resolves the symbol-level behavior. The observed ordering
\begin{equation}
    P_e(P_1;\gamma) > P_e(P_2;\gamma) > P_e(P_3;\gamma) = P_e(P_4;\gamma)
    \label{eq:sim1_ordering}
\end{equation}
is clearly visible and cannot be explained by \(d_{\min}\) alone, since all four points share the same minimum distance. Instead, it is consistent with the bound-based reciprocal-distance coefficients in Theorem~\ref{thm:small_noise}. In this sense, the present experiment not only illustrates the theorem's bound and its associated descriptor numerically, but also provides the finite-\(\gamma\) background for the reciprocal-distance burden viewpoint introduced in Section~\ref{sec:reciprocal_distance_burden}.

\subsection{Finite-\texorpdfstring{$\gamma$}{gamma} Validation of Theorem~\ref{thm:large_noise}}
\label{sec:sim_large_gamma}

\begin{figure*}[t]
    \centering
    
    \begin{minipage}[c]{0.73\textwidth}
        \centering
        \includegraphics[width=\linewidth]{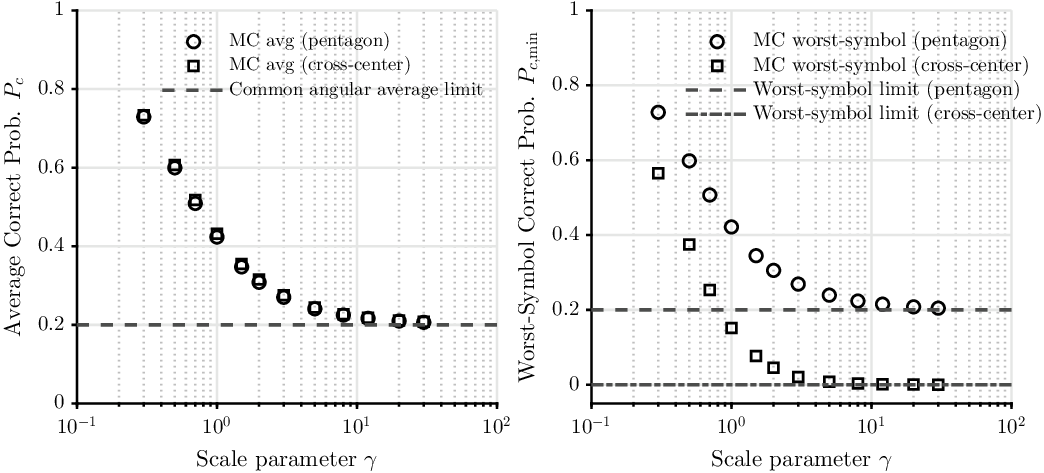}
    \end{minipage}
    \hfill
    \begin{minipage}[c]{0.24\textwidth}
        \centering
        
        \begin{subfigure}{\linewidth}
            \centering
            \begin{tikzpicture}[scale=1.0, >=stealth]
                \draw[gray!70, thin, ->] (-1.35,0) -- (1.35,0) node[right, text=black] {\scriptsize $x$};
                \draw[gray!70, thin, ->] (0,-1.35) -- (0,1.35) node[above, text=black] {\scriptsize $y$};
                
                \draw[dashed, thick, black]
                    (0:1) -- (72:1) -- (144:1) -- (216:1) -- (288:1) -- cycle;
                    
                \foreach \a in {0,72,144,216,288} {
                    \fill (\a:1) circle (1.5pt);
                }
            \end{tikzpicture}
            \caption{Regular pentagon}
            \label{fig:design_hull_pent}
        \end{subfigure}
        
        \vspace{1em} 
        
        \begin{subfigure}{\linewidth}
            \centering
            \begin{tikzpicture}[scale=1.0, >=stealth]
                \pgfmathsetmacro{\r}{sqrt(1.25)}
                
                \draw[gray!70, thin, ->] (-1.35,0) -- (1.35,0) node[right, text=black] {\scriptsize $x$};
                \draw[gray!70, thin, ->] (0,-1.35) -- (0,1.35) node[above, text=black] {\scriptsize $y$};
                
                \draw[dashed, thick, black]
                    (\r,0) -- (0,\r) -- (-\r,0) -- (0,-\r) -- cycle;
                    
                \fill (0,0) circle (1.5pt);
                \fill (\r,0) circle (1.5pt);
                \fill (-\r,0) circle (1.5pt);
                \fill (0,\r) circle (1.5pt);
                \fill (0,-\r) circle (1.5pt);
                
                \draw[->, gray!70, thin] (0.40,0.38)
                    node[right, text=black, font=\scriptsize] {\scriptsize Interior}
                    -- (0.06,0.04);
            \end{tikzpicture}
            \caption{Cross-with-center}
            \label{fig:design_hull_cross}
        \end{subfigure}
        
    \end{minipage}
    
    \vspace{0.5em}
    
    \caption{Descriptor-guided comparison for the hull-only principle under a common average power budget \(P=1\). The regular pentagon \(\mathcal C_{\mathrm{pent}}\) is compared with the cross-with-center design \(\mathcal C_{\mathrm{cross}}\). \textbf{Left:} average correct-decision probability versus \(\gamma\). \textbf{Right:} worst-symbol correct-decision probability versus \(\gamma\). Both constellations approach the same average large-noise limit \(1/5\), but only the pentagon retains a nonzero worst-symbol limit; the cross-with-center design collapses toward zero because of its interior symbol. \textbf{Insets (a) and (b)} show the corresponding constellation geometries.}
    \label{fig:design_hull}
\end{figure*}

We next validate the large-noise prediction in Theorem~\ref{thm:large_noise} using the same asymmetric four-point constellation from Appendix~\ref{app:four_point},
\begin{equation}
    \mathcal{X}=\{(0,0),(1,0),(0,1),(0,-1)\}.
    \label{eq:sim2_constellation}
\end{equation}
For each
\begin{equation}
    \gamma \in \{0.2, 0.3, 0.5, 0.7, 1, 1.5, 2, 3, 5, 8, 12, 20, 30\},
    \label{eq:sim2_gamma_grid}
\end{equation}
we estimate the correct-decision probabilities by Monte Carlo simulation and compare them with the angular limits predicted by Theorem~\ref{thm:large_noise}.

For this constellation, Appendix~\ref{app:four_point} gives the conditional large-noise limits
\begin{equation}
\begin{aligned}
    \lim_{\gamma\to\infty} P_c(P_1;\gamma) &= 0,\\
    \lim_{\gamma\to\infty} P_c(P_2;\gamma) &= \frac{1}{4},\\
    \lim_{\gamma\to\infty} P_c(P_3;\gamma) &= \lim_{\gamma\to\infty} P_c(P_4;\gamma) = \frac{3}{8},
\end{aligned}
\label{eq:sim2_cond_limits}
\end{equation}
and hence the average correct-decision probability satisfies
\begin{equation}
    \lim_{\gamma\to\infty} P_c(\gamma)=\frac{1}{4}.
    \label{eq:sim2_avg_limit}
\end{equation}

Fig.~\ref{fig:baseline_validations}(c)--(d) summarizes the corresponding finite-\(\gamma\) trend toward the large-noise angular limit. In Fig.~\ref{fig:baseline_validations}(c), the average correct-decision probability decreases toward \(1/4\), in agreement with Theorem~\ref{thm:large_noise}. Fig.~\ref{fig:baseline_validations}(d) shows that the conditional correct-decision probabilities approach their distinct angular limits predicted by Appendix~\ref{app:four_point}.

Most notably, the symbol \(P_1=(0,0)\) becomes asymptotically unreliable, with its correct-decision probability tending to zero. This provides direct finite-\(\gamma\) evidence for the geometric-collapse phenomenon in Corollary~\ref{cor:collapse}. By contrast, \(P_2\) approaches the nonzero limit \(1/4\), while \(P_3\) and \(P_4\) approach the common limit \(3/8\), consistent with the symmetry of the constellation.

Together with the previous subsection, this experiment confirms that the large-noise angular descriptor is already numerically meaningful before the asymptotic limit is fully reached. It therefore provides the numerical backdrop for the hull-only principle and collapse-avoidance discussion in Section~\ref{sec:hull_only_principle}.

\subsection{Descriptor-Guided Design Comparison}
\label{sec:design_comparison}

\begin{figure*}[t]
    \centering
    
    \begin{minipage}[c]{0.73\textwidth}
        \centering
        \includegraphics[width=\linewidth]{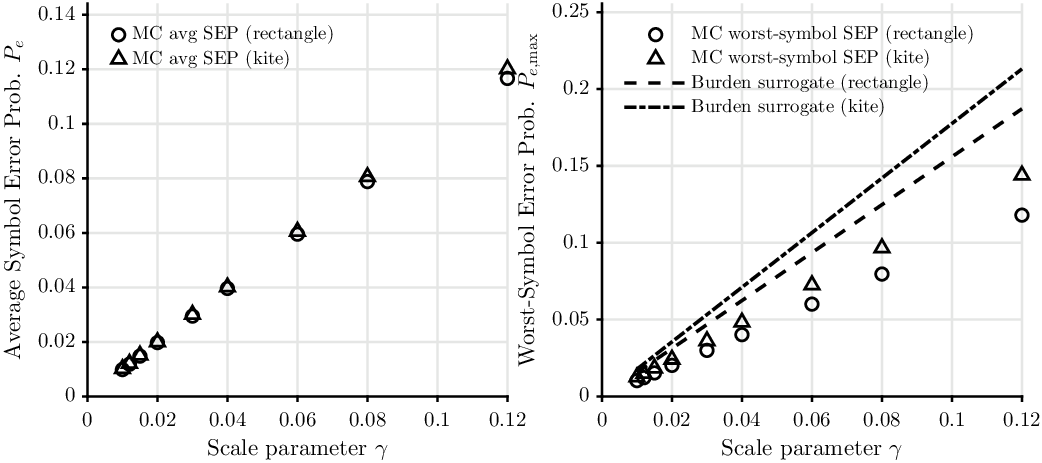}
    \end{minipage}
    \hfill
    \begin{minipage}[c]{0.24\textwidth}
        \centering

        
        \begin{subfigure}{\linewidth}
            \centering
            \begin{tikzpicture}[scale=1.15, >=stealth]
                \pgfmathsetmacro{\hy}{sqrt(3/8)}
                
                \draw[gray!70, thin, ->] (-1.2,0) -- (1.2,0) node[right, text=black] {\scriptsize $x$};
                \draw[gray!70, thin, ->] (0,-1.2) -- (0,1.2) node[above, text=black] {\scriptsize $y$};
                
                \draw[dashed, thick, black]
                    (0.5,\hy) -- (-0.5,\hy) -- (-0.5,-\hy) -- (0.5,-\hy) -- cycle;
                    
                \fill (0.5,\hy) circle (1.5pt);
                \fill (-0.5,\hy) circle (1.5pt);
                \fill (-0.5,-\hy) circle (1.5pt);
                \fill (0.5,-\hy) circle (1.5pt);
            \end{tikzpicture}
            \caption{Rectangle}
            \label{fig:design_burden_rect}
        \end{subfigure}
        
        \vspace{1em}
        
        \begin{subfigure}{\linewidth}
            \centering
            \begin{tikzpicture}[scale=1.15, >=stealth]
                \draw[gray!70, thin, ->] (-1.2,0) -- (1.2,0) node[right, text=black] {\scriptsize $x$};
                \draw[gray!70, thin, ->] (0,-1.2) -- (0,1.2) node[above, text=black] {\scriptsize $y$};
                
                \draw[dashed, thick, black]
                    (0.5,0) -- (0,1) -- (-0.5,0) -- (0,-1) -- cycle;
                    
                \fill (0,1) circle (1.5pt);
                \fill (0.5,0) circle (1.5pt);
                \fill (0,-1) circle (1.5pt);
                \fill (-0.5,0) circle (1.5pt);
            \end{tikzpicture}
            \caption{Kite}
            \label{fig:design_burden_kite}
        \end{subfigure}
        
    \end{minipage}
    
    \vspace{0.5em}
    
    \caption{Descriptor-guided comparison for the reciprocal-distance burden under a common average power budget \(\mathbb E[\|X\|^2]=5/8\). The rectangle \(\mathcal C_{\mathrm{rect}}\) and the kite \(\mathcal C_{\mathrm{kite}}\) are matched in minimum distance \((d_{\min}=1)\) but have different worst-symbol burdens \(B_{\max}\). \textbf{Left:} average symbol error probability versus \(\gamma\). The rectangle remains slightly more reliable throughout the tested range, although the separation is modest at the average level. \textbf{Right:} worst-symbol symbol error probability versus \(\gamma\), together with the linear burden surrogates \((2\gamma/\pi)B_{\max}\). The kite exhibits a consistently larger worst-symbol error probability, and the two surrogates preserve the correct ordering suggested by \(B_{\max}\). \textbf{Insets (a) and (b)} show the corresponding constellation geometries. These results show that the reciprocal-distance burden retains useful finite-\(\gamma\) screening value beyond \(d_{\min}\)-based reasoning.}
    \label{fig:design_burden}
\end{figure*}

The baseline validations in Sections~\ref{sec:sim_small_gamma} and~\ref{sec:sim_large_gamma} confirm that the two asymptotic descriptors remain numerically visible at finite noise scales. We now turn to a more design-oriented question: whether these descriptors also provide useful screening value when candidate planar constellations are compared under a common average power budget. To keep the present extension minimal, we consider two targeted comparison experiments, one for the hull-only principle in Section~\ref{sec:hull_only_principle} and one for the reciprocal-distance burden in Section~\ref{sec:reciprocal_distance_burden}.

\paragraph*{1) Hull-only versus interior-point designs}
For the first experiment, we compare a hull-only five-point design with a five-point design containing an interior point. Under the common average power budget \(P=1\), let
\begin{equation}
    \mathcal C_{\mathrm{pent}} := \left\{ \left(\cos\frac{2\pi k}{5},\,\sin\frac{2\pi k}{5}\right) :\ k=0,1,2,3,4 \right\}
    \label{eq:pentagon_constellation}
\end{equation}
denote the regular pentagon, and let
\begin{equation}
\begin{aligned}
    \mathcal C_{\mathrm{cross}} := \Big\{ &(0,0),\, \left(\sqrt{5/4},0\right),\, \left(-\sqrt{5/4},0\right), \\
    & \left(0,\sqrt{5/4}\right),\, \left(0,-\sqrt{5/4}\right) \Big\}
\end{aligned}
\label{eq:cross_center_constellation}
\end{equation}
denote a cross-with-center design. The former is hull-only, whereas the latter contains an interior point and therefore satisfies
\(
    A_{\min}(\mathcal C_{\mathrm{cross}})=0.
    \label{eq:amin_cross}
\)
By contrast, all five vertices of \(\mathcal C_{\mathrm{pent}}\) have equal exterior angle \(2\pi/5\), so
\(
    A_{\min}(\mathcal C_{\mathrm{pent}})=\frac15.
    \label{eq:amin_pent}
\)
It is worth noting that these two constellations share the same average large-noise limit:
\begin{equation}
    \lim_{\gamma \to \infty} P_c^{(\mathrm{pent})}(\gamma) = \lim_{\gamma \to \infty} P_c^{(\mathrm{cross})}(\gamma) = \frac{1}{5},
    \label{eq:large_noise_limit2}
\end{equation}
but differ sharply at the worst-symbol level. This makes them a particularly clean pair for testing whether the hull-only principle has practical screening value at finite noise scales.

Accordingly, we report both the average correct-decision probability and the worst-symbol correct-decision probability over the grid
\begin{equation}
    \gamma \in \{0.3, 0.5, 0.7, 1, 1.5, 2, 3, 5, 8, 12, 20, 30\}.
    \label{eq:sim3_gamma_grid}
\end{equation}
The average curve probes the coarse angular behavior of the overall design, whereas the worst-symbol curve directly tests whether interior placement leads to finite-\(\gamma\) fragility consistent with the collapse mechanism identified in Corollary~\ref{cor:collapse}.

Fig.~\ref{fig:design_hull} confirms that the decisive separation appears at the worst-symbol level rather than in the average large-noise behavior. Although the two constellations have nearly indistinguishable average correct-decision probabilities and the same asymptotic average limit \(1/5\), the regular pentagon retains a nonzero worst-symbol limit, whereas the cross-with-center design collapses toward zero. This is precisely the mechanism predicted in Section~\ref{sec:hull_only_principle}, and it shows that \(A_{\min}(\mathcal C)\) already has practical finite-\(\gamma\) screening value.

\paragraph*{2) Matched-\(d_{\min}\) comparison with different reciprocal-distance burden}
For the second experiment, we compare two four-point hull-only constellations that are matched in minimum distance and average power but differ in reciprocal-distance burden. Specifically, let
\begin{equation}
\begin{aligned}
    \mathcal C_{\mathrm{rect}} := \Big\{ & \left(1/2,\sqrt{3/8}\right),\, \left(-1/2,\sqrt{3/8}\right), \\
    & \left(-1/2,-\sqrt{3/8}\right),\, \left(1/2,-\sqrt{3/8}\right) \Big\}
\end{aligned}
\label{eq:rect_constellation}
\end{equation}
and
\begin{equation}
    \mathcal C_{\mathrm{kite}} := \left\{ (0,1),\, \left(1/2,0\right),\, (0,-1),\, \left(-1/2,0\right) \right\}
    \label{eq:kite_constellation}
\end{equation}
Under equiprobable signaling, both constellations satisfy the same average power budget
\begin{equation}
    \mathbb E[\|X\|^2]=\frac{5}{8}
    \label{eq:avg_power}
\end{equation}
and the same minimum distance
\begin{equation}
    d_{\min}(\mathcal C_{\mathrm{rect}}) = d_{\min}(\mathcal C_{\mathrm{kite}}) = 1.
    \label{eq:dmin_match}
\end{equation}
Thus, any separation in the small-noise regime cannot be attributed to \(d_{\min}\) alone.

The distinction instead appears in the worst-symbol reciprocal-distance burden. For the rectangle, symmetry gives
\begin{equation}
    B_{\max}(\mathcal C_{\mathrm{rect}}) = 1+\sqrt{\frac23}+\sqrt{\frac25},
    \label{eq:bmax_rect}
\end{equation}
whereas for the kite,
\begin{equation}
    B_{\max}(\mathcal C_{\mathrm{kite}}) = 1+\frac{4}{\sqrt5}.
    \label{eq:bmax_kite}
\end{equation}
Since the kite's burden is strictly larger,
\begin{equation}
    B_{\max}(\mathcal C_{\mathrm{kite}}) > B_{\max}(\mathcal C_{\mathrm{rect}})
    \label{eq:bmax_comparison}
\end{equation}
Section~\ref{sec:reciprocal_distance_burden} suggests that the kite should be more vulnerable at the worst-symbol level in the high-reliability regime, despite the exact matching in \(d_{\min}\) and average power.

We therefore evaluate both the average symbol error probability and the worst-symbol error probability over the grid
\begin{equation}
    \gamma \in \{0.01, 0.012, 0.015, 0.02, 0.03, 0.04, 0.06, 0.08, 0.12\},
    \label{eq:sim4_gamma_grid}
\end{equation}
which matches the small-noise validation grid in \eqref{eq:sim1_gamma_grid}. The purpose of this experiment is to test whether \(B_{\max}(\mathcal C)\) already provides nontrivial screening power beyond the Gaussian-style minimum-distance heuristic, especially at the worst-symbol level.

Fig.~\ref{fig:design_burden} shows that the separation is modest at the average level but clearer at the worst-symbol level. In particular, the kite consistently exhibits a larger worst-symbol error probability than the rectangle, in agreement with \eqref{eq:bmax_comparison}. The linear surrogates \((2\gamma/\pi)B_{\max}\) are not intended as exact finite-\(\gamma\) predictors, but they remain above the Monte Carlo curves and preserve the correct ordering, indicating that \(B_{\max}(\mathcal C)\) provides useful screening information beyond \(d_{\min}\)-based reasoning.

\subsection{Summary of Design-Oriented Numerical Findings}
\label{sec:design_summary}

Overall, the numerical results support the use of the asymptotic quantities developed in Sections~\ref{sec:small_noise} and~\ref{sec:large_noise} as finite-\(\gamma\) design descriptors. The baseline experiments verify that both the distance-spectrum upper-bound descriptor and the angular descriptor are visible before the limiting regimes are fully reached, while the targeted comparisons show that worst-symbol collapse risk and reciprocal-distance burden can separate candidate geometries beyond average-performance and \(d_{\min}\)-based reasoning. Thus, \((A_{\min},B_{\max})\) provides a lightweight but meaningful screening pair for planar isotropic Cauchy constellations under common power budgets.

\section{Discussion: Limitations and Future Directions}
\label{sec:limitations_future}

Several limitations of the present framework should be kept in mind. First, the noise model is restricted to the isotropic multivariate Cauchy law. This excludes both componentwise independent Cauchy coordinates and more general anisotropic or \(\alpha\)-stable heavy-tailed models, whose induced decision geometry and reliability descriptors may differ substantially. Second, the detection setting is deliberately one-shot and uncoded, with maximum-likelihood decoding induced by Euclidean Voronoi cells. The present analysis therefore isolates symbol-level reliability mechanisms, but does not address coded systems, soft information, sequence detection, memory, or iterative receivers. Third, the main results in Sections~\ref{sec:small_noise} and~\ref{sec:large_noise} are asymptotic. Although Section~\ref{sec:numerical_results} shows that the associated descriptors are already visible at finite noise scales, the theory does not yet provide a full nonasymptotic characterization across the intermediate-\(\gamma\) regime.

The design scope is likewise intentionally modest. Section~\ref{sec:design_implications} is restricted to two-dimensional, equiprobable, finite constellations under a common average power budget, and the quantities \(A_{\min}(\mathcal C)\) and \(B_{\max}(\mathcal C)\) are used only as descriptor-level screening tools. Thus, the present paper does not claim a full constellation synthesis theory, nor does it establish optimality guarantees for any design procedure.

These limitations suggest several directions for future work. On the analytical side, an important next step is to develop finite-\(\gamma\) bridges between the reciprocal-distance and angular regimes. On the modeling side, it would be valuable to extend the framework beyond the isotropic multivariate Cauchy setting to anisotropic laws, general \(\alpha\)-stable models, and mismatched-noise scenarios. On the design side, the descriptor-guided comparisons in Sections~\ref{sec:design_implications} and~\ref{sec:numerical_results} suggest the feasibility of a more complete Cauchy-aware constellation synthesis framework, potentially including numerical optimization, robustness criteria, and higher-dimensional constructions. It is also natural to ask whether the present descriptors can inform coded modulation, soft metrics, or receiver architectures in heavy-tailed communication systems (cf. \cite{brown2000nonparametric,tsihrintzis2002performance,tepedelenlioglu2005diversity}).

\section{Conclusion}
\label{sec:conclusion}

This paper developed a geometric asymptotic perspective on one-shot detection under isotropic multivariate Cauchy noise. Although isotropic Cauchy noise induces the same Euclidean Voronoi decision regions as Gaussian noise, the corresponding reliability laws differ fundamentally because the probability mass assigned to those regions is redistributed by the heavy-tailed noise geometry. 
In the small-noise regime, we established a reciprocal distance-spectrum upper bound showing that the associated error-probability descriptor retains sensitivity to the full constellation geometry. 
In the large-noise regime, we proved that the correct-decision probability converges to an angular limit determined solely by the Voronoi recession cone.
Together, these results formalize a regime-dependent transition from bound-based distance descriptors to angle-based reliability descriptors.

For planar constellations, the descriptor-guided comparisons in Sections~\ref{sec:design_implications} and~\ref{sec:numerical_results} further show that these asymptotic quantities already possess meaningful finite-\(\gamma\) screening value: \(A_{\min}(\mathcal C)\) captures collapse avoidance at the worst-symbol level, while \(B_{\max}(\mathcal C)\) distinguishes candidate geometries beyond the classical minimum-distance heuristic. The worked examples in Appendices~\ref{app:four_point} and~\ref{app:4qam}, together with the finite-\(\gamma\) numerical results in Section~\ref{sec:numerical_results}, illustrate how this transition can lead either to highly nonuniform symbol reliability, including complete geometric collapse, or to symmetric limiting behavior in standard constellations such as \(4\)QAM.

\appendices

\begin{figure*}[ht]
    \centering
    
    \begin{subfigure}[b]{0.53\linewidth}
        \centering
        \resizebox{\linewidth}{!}{%
            \begin{tikzpicture}[scale=2.0, >=Stealth, font=\small]
            \definecolor{cblue}{HTML}{1f77b4}
            \definecolor{cred}{HTML}{d62728}
            \definecolor{cgreen}{HTML}{2ca02c}
            \definecolor{cyellow}{HTML}{ff7f0e} 
            
            \tikzset{
                axis style/.style={->, gray!60, shorten >=-2pt}, 
                point style/.style={circle, fill, inner sep=1.5pt}
            }
            
            \begin{scope}[shift={(0,0)}]
                \node[font=\bfseries, anchor=south] at (0, 1.9) {Voronoi Cells};
                
                \def\Lext{1.6} 
                \def\AxExt{1.8}
            
                \fill[cblue!20] (-\Lext,-0.5) rectangle (0.5,0.5); 
                \fill[cred!20] (0.5,-0.5) -- (\Lext,-\Lext) -- (\Lext,\Lext) -- (0.5,0.5) -- cycle; 
                \fill[cgreen!20] (-\Lext,0.5) -- (0.5,0.5) -- (\Lext,\Lext) -- (-\Lext,\Lext) -- cycle; 
                \fill[cyellow!20] (-\Lext,-0.5) -- (0.5,-0.5) -- (\Lext,-\Lext) -- (-\Lext,-\Lext) -- cycle; 
            
                \draw[axis style] (-\AxExt,0) -- (\AxExt,0) node[right=3pt, gray] {$x$};
                \draw[axis style] (0,-\AxExt) -- (0,\AxExt) node[right=3pt, gray] {$y$};
            
                \draw[thick, cblue!80!black] (0.5,0.5) -- (-\Lext,0.5);
                \draw[thick, cblue!80!black] (0.5,-0.5) -- (-\Lext,-0.5);
                \draw[thick, cblue!80!black] (0.5,0.5) -- (0.5,-0.5);
                \draw[thick, dashed, cblue!80!black] (0.5,0.5) -- (\Lext,\Lext);
                \draw[thick, dashed, cblue!80!black] (0.5,-0.5) -- (\Lext,-\Lext);
            
                \node[point style, cblue!80!black] at (0,0) {}; \node[cblue!80!black, below left] at (0,0) {$P_1$};
                \node[point style, cred!80!black] at (1,0) {};  \node[cred!80!black, right] at (1,0) {$P_2$};
                \node[point style, cgreen!80!black] at (0,1) {}; \node[cgreen!80!black, above left] at (0,1) {$P_3$};
                \node[point style, cyellow!80!black] at (0,-1) {}; \node[cyellow!80!black, below left] at (0,-1) {$P_4$};
            
                \node[cblue!80!black] at (-0.8,0) {$V(P_1)$};
            \end{scope}
            
            \begin{scope}[shift={(4.5,0)}]
                \node[font=\bfseries, anchor=south] at (0, 1.9) {Voronoi Cones / Angular Patches};
                
                \def\R{1.5}
                \def\AxExt{1.8}
                
                \fill[cred!20] (0,0) -- (-45:\R) arc (-45:45:\R) -- cycle;
                \fill[cgreen!20] (0,0) -- (45:\R) arc (45:180:\R) -- cycle;
                \fill[cyellow!20] (0,0) -- (180:\R) arc (180:315:\R) -- cycle;
            
                \draw[thin, gray!40] (0,0) -- (45:\R);
                \draw[thin, gray!40] (0,0) -- (-45:\R);
                \draw[thin, gray!40] (0,0) -- (180:\R);
            
                \draw[line width=2pt, cblue, ->] (0,0) -- (180:\R+0.25) node[left=2pt] {$\mathsf{K}(P_1)$};
                \node[cblue, scale=0.75, fill=white, inner sep=1pt, anchor=south] at (-0.8, 0.05) {Measure = 0};
            
                \draw[dashed, gray!30] (0,0) circle (\R);
                
                \draw[axis style] (-\AxExt,0) -- (\AxExt,0) node[right=3pt, gray] {$u_x$};
                \draw[axis style] (0,-\AxExt) -- (0,\AxExt) node[right=3pt, gray] {$u_y$};
                
                \node[cred!80!black] at (0:0.9) {$90^\circ$};
                \node[cgreen!80!black] at (112.5:0.9) {$135^\circ$};
                \node[cyellow!80!black] at (247.5:0.9) {$135^\circ$};
            \end{scope}
            \end{tikzpicture}
        }
        \caption{Geometric properties}
        \label{fig:voronoi_geometry}
    \end{subfigure}
    \hfill
    \begin{subfigure}[b]{0.45\linewidth}
    \centering
    \renewcommand{\arraystretch}{1.08}
    \setlength{\tabcolsep}{3pt}
    \footnotesize
    \resizebox{\linewidth}{!}{%
    \begin{tabular}{lccc}
    \toprule
    Constellation Point & AWGN & Isotropic Cauchy & Isotropic Cauchy \\ 
    $x$ & (small $\sigma$) & (small $\gamma$) & ($\gamma\to\infty$, $P_e$) \\ \midrule
    $P_1=(0,0)$ & $\approx 3\,Q\!\left(\frac{1}{2\sigma}\right)$ & $\le \frac{6}{\pi}\gamma+O(\gamma^3)$ & $1$ \\[0.4em]
    $P_2=(1,0)$ & $\approx Q\!\left(\frac{1}{2\sigma}\right)$ & $\le \frac{2}{\pi}(1+\sqrt2)\gamma+O(\gamma^3)$ & $3/4$ \\[0.4em]
    $P_3=(0,1)$ & $\approx Q\!\left(\frac{1}{2\sigma}\right)$ & $\le \frac{1}{\pi}(3+\sqrt2)\gamma+O(\gamma^3)$ & $5/8$ \\[0.4em]
    $P_4=(0,-1)$ & $\approx Q\!\left(\frac{1}{2\sigma}\right)$ & $\le \frac{1}{\pi}(3+\sqrt2)\gamma+O(\gamma^3)$ & $5/8$ \\ \bottomrule
    \end{tabular}%
    }
    \vspace{1.2cm} 
    \caption{Quantitative reliability descriptors}
    \label{fig:quantitative_table}
    \end{subfigure}
    
    \caption{
    Geometric and quantitative analysis for the illustrative four-point constellation $\mathcal{X}$ (see Appendix~\ref{app:four_point}). \textbf{(a)} \textbf{Left:} Euclidean Voronoi cells. \textbf{Right:} recession cones and their angular patches on $\mathbb S^1$. The point $P_1$ has a degenerate cone of zero angular measure, whereas $P_2, P_3, P_4$ retain nontrivial patches with opening angles $\pi/2$, $3\pi/4$, and $3\pi/4$, respectively. \textbf{(b)} The corresponding theoretical reliability descriptors showcasing the quantitative transition from Gaussian minimum-distance behavior to Cauchy distance-spectrum upper-bound descriptors and angular limits.}
    \label{fig:combined_voronoi_table}
\end{figure*}
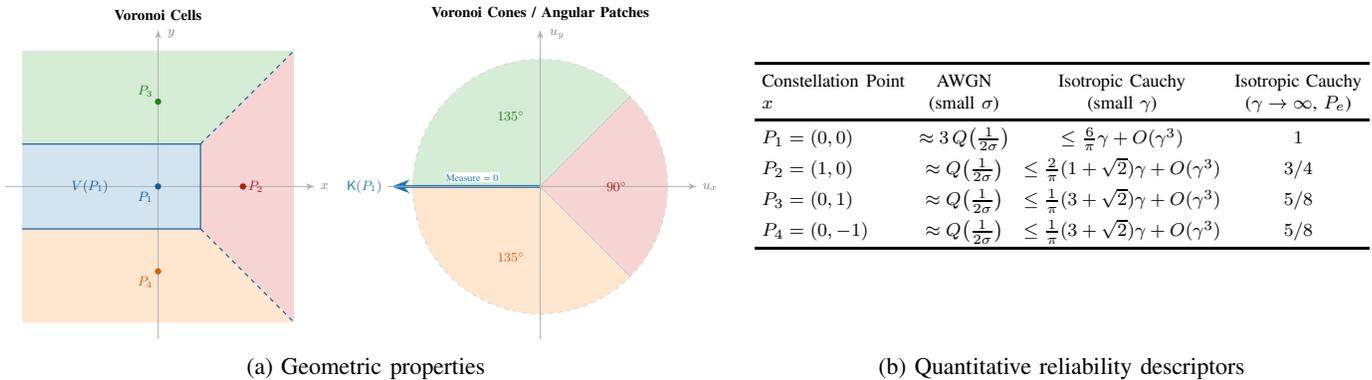

\section{Geometric Collapse: A Four-Point Example}
\label{app:four_point}

To concretely illustrate Theorems~\ref{thm:small_noise} and~\ref{thm:large_noise}, we analyze the four-point constellation $\mathcal X=\{P_1,P_2,P_3,P_4\}\subset\mathbb R^2$ with $P_1=(0,0)$, $P_2=(1,0)$, $P_3=(0,1)$, and $P_4=(0,-1)$. This example is deliberately minimal: all points share the same nearest-neighbor distance $d_{\min}=1$, yet their large-noise behavior under isotropic Cauchy noise differs substantially because their Voronoi recession cones have different angular measures.

The AWGN column is included only as a qualitative comparison: under Gaussian noise, the nearest-neighbor structure largely governs the high-reliability regime, whereas under the isotropic Cauchy model the small-noise upper bound depends on the full reciprocal distance spectrum and the large-noise limit depends only on angular measures.

\subsection{Small-Noise Limit: Explicit Distance-Spectrum Evaluation}

The pairwise distances are $d_{12}=d_{13}=d_{14}=1$, $d_{23}=d_{24}=\sqrt2$, and $d_{34}=2$. By Theorem~\ref{thm:small_noise}, for each transmitted symbol $P_i$,
\begin{equation}
    P_e(P_i;\gamma)\le \frac{2\gamma}{\pi}\sum_{j\neq i}\frac{1}{d_{ij}}+O(\gamma^3),\qquad \gamma\to 0.
\end{equation}

For $P_1=(0,0)$, all three competitors are at unit distance, so
\begin{equation}
\begin{aligned}
    P_e(P_1;\gamma) &\le \frac{2\gamma}{\pi}(1+1+1)+O(\gamma^3) \\
    &= \frac{6}{\pi}\gamma+O(\gamma^3).
\end{aligned}
\end{equation}

For $P_2=(1,0)$, the distances to $P_1,P_3,P_4$ are $1,\sqrt2,\sqrt2$, respectively, hence
\begin{equation}
\begin{aligned}
    P_e(P_2;\gamma) &\le \frac{2\gamma}{\pi}\left(1+\frac{1}{\sqrt2}+\frac{1}{\sqrt2}\right)+O(\gamma^3) \\
    &= \frac{2}{\pi}(1+\sqrt2)\gamma+O(\gamma^3).
\end{aligned}
\end{equation}

For $P_3=(0,1)$, the distances to $P_1,P_2,P_4$ are $1,\sqrt2,2$, respectively, so
\begin{equation}
\begin{aligned}
    P_e(P_3;\gamma) &\le \frac{2\gamma}{\pi}\left(1+\frac{1}{\sqrt2}+\frac{1}{2}\right)+O(\gamma^3) \\
    &= \frac{1}{\pi}(3+\sqrt2)\gamma+O(\gamma^3).
\end{aligned}
\end{equation}

By symmetry, the same expression holds for $P_4=(0,-1)$:
\begin{equation}
    P_e(P_4;\gamma)\le \frac{1}{\pi}(3+\sqrt2)\gamma+O(\gamma^3).
\end{equation}

These calculations reproduce exactly the isotropic Cauchy small-$\gamma$ column in Fig.~\ref{fig:quantitative_table}. In particular, although all four points share the same minimum distance, their reciprocal distance sums differ, so Theorem~\ref{thm:small_noise} already distinguishes their bound-based small-noise descriptors at leading order.

\subsection{Large-Noise Limit: Explicit Angular-Patch Evaluation}

We now verify Theorem~\ref{thm:large_noise} by computing the recession cone
\begin{equation}
    \mathsf K(x) = \bigcap_{c\in\mathcal X\setminus\{x\}} \{u\in\mathbb R^2:\ u^{\mathsf T}(c-x)\le 0\}
\end{equation}
for each constellation point. In two dimensions, Theorem~\ref{thm:large_noise} reduces to
\begin{equation}
    \lim_{\gamma\to\infty}P_c(x;\gamma) = \frac{\text{arc length of }A_x}{2\pi}, \quad A_x = \mathsf K(x)\cap\mathbb S^1.
    \label{eq:appendix_2d_limit}
\end{equation}

\paragraph*{1) Point $P_1=(0,0)$}
The competitor offsets are $P_2-P_1=(1,0)$, $P_3-P_1=(0,1)$, and $P_4-P_1=(0,-1)$. Hence the cone constraints are $u_x\le 0$, $u_y\le 0$, and $-u_y\le 0$, equivalently $u_x\le 0$ and $u_y=0$. Thus $\mathsf K(P_1)$ is the negative $x$-axis ray, and the angular patch $A_{P_1}$ consists of a single point on $\mathbb S^1$. Its arc length is therefore zero, so
\begin{equation}
    \lim_{\gamma\to\infty}P_c(P_1;\gamma) = 0, \quad \lim_{\gamma\to\infty}P_e(P_1;\gamma) = 1.
\end{equation}
This is the geometric-collapse phenomenon predicted by Corollary~\ref{cor:collapse}.

\paragraph*{2) Point $P_2=(1,0)$}
The competitor offsets are $P_1-P_2=(-1,0)$, $P_3-P_2=(-1,1)$, and $P_4-P_2=(-1,-1)$. Hence $-u_x\le 0$, $-u_x+u_y\le 0$, and $-u_x-u_y\le 0$, which is equivalent to $u_x\ge 0$ and $|u_y|\le u_x$. Therefore $\mathsf K(P_2)$ is a wedge centered on the positive $x$-axis with opening angle $\pi/2$, so its spherical patch has arc length $\pi/2$. By \eqref{eq:appendix_2d_limit},
\begin{equation}
    \lim_{\gamma\to\infty}P_c(P_2;\gamma) = \frac{\pi/2}{2\pi} = \frac{1}{4}, \quad \lim_{\gamma\to\infty}P_e(P_2;\gamma) = \frac{3}{4}.
\end{equation}

\paragraph*{3) Point $P_3=(0,1)$}
The competitor offsets are $P_1-P_3=(0,-1)$, $P_2-P_3=(1,-1)$, and $P_4-P_3=(0,-2)$. Hence $-u_y\le 0$, $u_x-u_y\le 0$, and $-2u_y\le 0$, which reduces to $u_y\ge 0$ and $u_x\le u_y$. Thus $\mathsf K(P_3)$ is the wedge spanning angles from $\pi/4$ to $\pi$, whose opening angle is $3\pi/4$. Therefore
\begin{equation}
    \lim_{\gamma\to\infty}P_c(P_3;\gamma) = \frac{3\pi/4}{2\pi} = \frac{3}{8}, \quad \lim_{\gamma\to\infty}P_e(P_3;\gamma) = \frac{5}{8}.
\end{equation}

\paragraph*{4) Point $P_4=(0,-1)$}
By symmetry, $P_1-P_4=(0,1)$, $P_2-P_4=(1,1)$, and $P_3-P_4=(0,2)$, so the cone constraints are $u_y\le 0$ and $u_x+u_y\le 0$, equivalently $u_y\le 0$ and $u_x\le -u_y$. Hence $\mathsf K(P_4)$ is the reflected wedge spanning angles from $\pi$ to $7\pi/4$, again with opening angle $3\pi/4$. Thus
\begin{equation}
    \lim_{\gamma\to\infty}P_c(P_4;\gamma) = \frac{3\pi/4}{2\pi} = \frac{3}{8}, \quad \lim_{\gamma\to\infty}P_e(P_4;\gamma) = \frac{5}{8}.
\end{equation}

Under equiprobable signaling, symmetry implies that the average correct-decision probability also satisfies
\begin{equation}
    \lim_{\gamma\to\infty}P_c(\gamma) = \frac{1}{4}\left(0+\frac{1}{4}+\frac{3}{8}+\frac{3}{8}\right) = \frac{1}{4},
    \label{eq:appendix_average_pc}
\end{equation}
and hence
\begin{equation}
    \lim_{\gamma\to\infty}P_e(\gamma) = 1-\frac{1}{4} = \frac{3}{4}.
    \label{eq:appendix_average_pe}
\end{equation}

This example shows explicitly that the small-noise and large-noise regimes are governed by fundamentally different geometric descriptors. In the high-reliability regime, the relevant quantity is the reciprocal distance sum appearing in Theorem~\ref{thm:small_noise}; in the infinite-noise regime, all bounded distance information disappears and only the angular patch measure from Theorem~\ref{thm:large_noise} survives.

\section{A Standard 4QAM Sanity Check}
\label{app:4qam}

As a complementary sanity check, we briefly examine the standard $4$QAM constellation $\mathcal Q=\{Q_1,Q_2,Q_3,Q_4\}\subset\mathbb R^2$ with $Q_1=(1,1)$, $Q_2=(-1,1)$, $Q_3=(-1,-1)$, and $Q_4=(1,-1)$. Unlike the asymmetric four-point example in Appendix~\ref{app:four_point}, this constellation is fully symmetric. As a result, both the small-noise distance descriptor and the large-noise angular descriptor are identical for all four symbols.

\subsection{Small-Noise Regime}

For any fixed point, say $Q_1=(1,1)$, the distances to the other three points are $\|Q_1-Q_2\|=2$, $\|Q_1-Q_4\|=2$, and $\|Q_1-Q_3\|=2\sqrt2$. Hence, by Theorem~\ref{thm:small_noise},
\begin{equation}
\begin{aligned}
    P_e(Q_1;\gamma) &\le \frac{2\gamma}{\pi}\left(\frac12+\frac12+\frac{1}{2\sqrt2}\right)+O(\gamma^3) \\
    &= \frac{1}{\pi}\left(2+\frac{1}{\sqrt2}\right)\gamma + O(\gamma^3), \quad \gamma\to 0.
\end{aligned}
\label{eq:4qam_small_noise}
\end{equation}
By symmetry, the same bound holds for $Q_2,Q_3,Q_4$.
Thus all symbols have the same leading-order bound-based small-noise descriptor.

\subsection{Large-Noise Regime}

We next evaluate the recession cones. For $Q_1=(1,1)$, the competitor offsets are $Q_2-Q_1=(-2,0)$, $Q_3-Q_1=(-2,-2)$, and $Q_4-Q_1=(0,-2)$. Therefore, the cone constraints are $-2u_x\le 0$, $-2u_x-2u_y\le 0$, and $-2u_y\le 0$, which imply $u_x\ge 0$ and $u_y\ge 0$. Hence $\mathsf K(Q_1)$ is the first quadrant, whose angular patch on $\mathbb S^1$ has arc length $\pi/2$. By Theorem~\ref{thm:large_noise},
\begin{equation}
    \lim_{\gamma\to\infty} P_c(Q_1;\gamma) = \frac{\pi/2}{2\pi} = \frac{1}{4}, \quad \lim_{\gamma\to\infty} P_e(Q_1;\gamma) = \frac{3}{4}.
    \label{eq:4qam_large_noise}
\end{equation}
Again by symmetry, the same limit holds for all four symbols:
\begin{equation}
    \lim_{\gamma\to\infty} P_c(Q_i;\gamma) = \frac{1}{4}, \quad \lim_{\gamma\to\infty} P_e(Q_i;\gamma) = \frac{3}{4},
    \label{eq:4qam_all_points}
\end{equation}
for $i=1,2,3,4$. Under equiprobable signaling, the average correct-decision probability therefore satisfies
\begin{equation}
    \lim_{\gamma\to\infty}P_c(\gamma) = \frac{1}{4}, \quad \lim_{\gamma\to\infty}P_e(\gamma) = \frac{3}{4}.
    \label{eq:4qam_average_limit}
\end{equation}

This example serves as a useful baseline: for a highly symmetric constellation such as $4$QAM, both the reciprocal distance sum in Theorem~\ref{thm:small_noise} and the angular patch measure in Theorem~\ref{thm:large_noise} are identical across symbols.
In contrast, the asymmetric example in Appendix~\ref{app:four_point} shows that these descriptors can vary substantially from point to point, leading to nonuniform bound-based small-noise descriptors and nonuniform large-noise reliability limits.

\balance
\bibliographystyle{IEEEtran}
\bibliography{awcn-clean}

\end{document}